\documentclass[margin=0.4in]{article}

\usepackage[utf8]{inputenc}
\usepackage{float}
\usepackage{geometry}
\usepackage{amsthm}
\usepackage[all]{xy}
\usepackage{amsmath}
\usepackage{amssymb}
\usepackage{amsfonts}
\usepackage{ebproof}
\usepackage{graphicx}
\usepackage{xcolor}

\usepackage{enumerate}

\newcommand{\tset}[1]{[\![#1]\!]}

\newcommand{\winform}[3]{#1 \; \begin{array}{| c |} #2 \\ #3 \end{array} \;}

\newcommand{\moncut}[1]{$\mathsf{cut}$}

\newcommand{\caseop}[3]{[#2\;\mathsf{if}\; #1 \;\mathsf{else}\; #3] }

\newcommand{\term}[1]{\mathbf{#1}}

\NewDocumentCommand{\termSum}{O{}}{ \term{\sumSym} {#1} }
\NewDocumentCommand{\termPlus}{m O{}}{ \term{\plusSym}\IfValueT{#2}{\mathrlap{^{\!#2}}}_{\!#1} }
\NewDocumentCommand{\termTimes}{o}{ \term{\prodSym} \IfValueT{#1}{^{#1\kern-1pt}} }
\NewDocumentCommand{\termResid}{o}{ \term{\resSym} \IfValueT{#1}{^{#1}} }

\NewDocumentEnvironment{scriptdisplay}{}
    {\begin{equation*}\scriptstyle}
    {\end{equation*}}

\NewDocumentEnvironment{ssprooftree}{}
    {\begin{prooftree}[compact]}
    {\end{prooftree}}

\NewDocumentEnvironment{ssprooftree*}{}
    {\begin{prooftree*}[compact]}
    {\end{prooftree*}}

\newcommand{\ReductionArrow}{\Rightarrow}
\DeclareDocumentCommand{\reduction}{ O{} m m }{%
    \begin{prooftree}[#1] #2 \end{prooftree}%
    \;\; \ReductionArrow \;\;%
    \begin{prooftree}[#1] #3 \end{prooftree}%
  }

\NewDocumentCommand \ntStyle { m }{ \mathsf{#1} }
\NewDocumentCommand \nt { s m m }{ \IfBooleanTF{#1}{\hat{\ntStyle{N}}}{\ntStyle{N}}^{#2}_{#3} }

\NewDocumentCommand{\setof}{ m o }{\{ \, #1 \IfValueT{#2}{ : #2}\, \} }
\NewDocumentCommand{\Setof}{ m o }{\Bigl\{ #1 \IfValueT{#2}{ : #2} \Bigr\} }
\NewDocumentCommand{\SETOF}{ m o }{\Biggl\{ #1 \IfValueT{#2}{ : #2} \Biggr\} }

\NewDocumentCommand{\Sub}{mo}{[ #1\IfNoValueF{#2}{/#2} ]}

\RenewDocumentCommand{\perp}{ m }{ \neg {#1} }
\newcommand{\plusSym}{{+}}

\newcommand{\prodSym}{{\times}}

\newcommand{\sumSym}{}

\newcommand{\resSym}{{/}}

\NewDocumentCommand{\rul}{ m e{_} o }{%
    \ensuremath{\mathsf{#1}%
    \IfValueT{#2}{_{#2}}%
    \IfValueT{#3}{^{#3}}
    }}
\NewDocumentCommand{\nrul}{ m o }{%
    \ensuremath{\perp{\mathsf{#1}}%
    \IfValueT{#2}{^{#2}}%
    }}
\NewDocumentCommand{\ruleset}{ o }{\operatorname{\mathit{Rules}}\IfValueT{#1}{(#1)}}
\NewDocumentCommand{\var}{o}{\mathit{V}\IfValueT{#1}{\!(#1)}}

\newtheorem{theo}{Theorem}

\newtheorem{prop}{Proposition}

\theoremstyle{definition}
\newtheorem{defi}{Definition}
\theoremstyle{definition}

\theoremstyle{definition}

\theoremstyle{definition}

\newcommand{\QFS}{\mathsf{QF}}
\newcommand{\qfsval}{\vdash_{\QFS}}

\title{From Herbrand schemes to functional interpretation}

\author{Sebastian Enqvist-Pyk}

\begin{document}
\maketitle

\begin{abstract}
Herbrand schemes are a method to extract Herband disjunctions directly from sequent calculus proofs, without appealing to cut elimination, using a formal grammar known as a higher-order recursion scheme. In this note, we show that the core ideas of Herbrand schemes can be reformulated as a functional interpretation of classical sequent calculus, similar to the functional interpretation of classical logic due to Gerhardy and Kohlenbach. We argue that this provides a natural computational interpretation of classical sequent calculus, in the same spirit as the game-theoretic approach due to Alcolei et al. that has previously been used to analyze Herbrand's theorem in terms of concurrency.  
\end{abstract}

\section{Introduction}

The conceptual connection between Herbrand's theorem and functional interpretation is well known. It was made explicit by Gerhardy and Kohlenbach in \cite{GerhardyK05}, in which Herbrand's theorem is proved using a functional interpretation of the classical first-order predicate calculus. The approach to Herbrand extraction via functional interpretation was further explored by Ferreira and Ferreira \cite{ferreira2017herbrandized}, who developed a so-called ``Herbrandized'' functional interpretation for classical first-order logic (see also \cite{van2012functional}).  In both these works, we obtain the result that the realizer extracted from a proof of an existential formula, when reduced to normal form, describes the witnessing set of terms for a Herbrand disjunction. Thus, Herbrand's theorem is elegantly located as a special case of functional interpretation. 

These works form part of a quite large literature on various approaches to Herbrand's theorem, using a variety of tools including proof nets \cite{heijltjes2010classical,mckinley2013proof},  expansion trees \cite{miller1987compact,aschieri2019expansion} and game semantics \cite{alcolei2018true}.  The present work builds on an approach developed by Afshari, Hetzl and Leigh \cite{APAL20}, and further in \cite{afshari2025herbrand,herbrandcyclic}, in which Herbrand disjunctions are extracted directly from sequent calculus proofs without relying on cut elimination. Functional interpretations, by contrast, are usually carried out for Hilbert-style axiomatic systems. 

The idea of Herbrand schemes is to view a sequent calculus proof as a type of grammar, known as a \emph{higher-order recursion scheme}. These grammars associate certain non-terminal symbols with formula occurrences in the end sequents of proofs, and these are supplied with rewrite rules depending on the last inference rule of the proof. For a proof of an existential formula, the language generated by the grammar is a set of witnessing terms for a Herbrand disjunction.  Again, there is a connection with functional interpretation here, noted already in \cite{APAL20}, in the use of higher-order types.

In this note we aim to further complete the emerging picture by showing how, using ideas from Herbrand schemes, we can derive realizing terms in a typed lambda-calculus similar to that used by Gerhardy and Kohlenbach directly from proofs in classical sequent calculus. These realizers allow us to construct Herbrand disjunctions for valid existential formulas. We thus obtain again a method to extract Herbrand disjunctions directly from sequent calculus proofs without relying on cut elimination, but this time using functional interpretation instead of a higher-order grammar.

A full presentation of Herbrand schemes would take up too much space here; the interested reader is advised to consult \cite{APAL20,afshari2025herbrand,herbrandcyclic} for comparison. The results presented here are self-contained and do not require a previous understanding of Herbrand schemes. However, the procedure to extract realizing terms from proofs presented here was found by a close examination of the type structure and rewrite rules given in \cite{APAL20}. Indeed the way that term extraction works here remains \emph{very} closely related to Herbrand schemes. In particular, we retain the following two features:

\begin{enumerate}
\item  Each formula $A$ is associated with two distinct types: an \emph{evidence type} $[A]$ and a \emph{counter-evidence type} $\langle A \rangle$ (called ``output'' and ``input'' types in \cite{APAL20}).
\item For each proof, and each formula occurrence in the end sequent, we essentially construct a program that extracts evidence for that formula from potential counter-evidence for each formula occurence of the end sequent. 
\end{enumerate}
This program extraction is modular and pieces together programs extracted from immediate subproofs uniformly from the last inference rule used. In Herbrand schemes, these programs are represented by non-terminal symbols of the grammar with associated rewrite rules. If a sequent calculus proof $p$ has end sequent $A_1,\hdots,A_k$, then a Herbrand scheme associates with each index $i$ a non-terminal symbol $F^p_i$ of type:
\[ \Sigma \to \langle A_1 \rangle \to \dotsm \to \langle  A_n \rangle \to [A_i] \]
where $\langle A_1 \rangle, \hdots, \langle A_n \rangle$ are the ``counter-evidence types'' of $A_1,\hdots, A_n$ and $[A_i]$ is the evidence type of the $i$-th formula $A_i$, and $\Sigma$ is a distinguished type of ``substitution stacks'' (which will not be used here). Here, we will similarly associate with each proof of end sequent  $A_1,\hdots, A_n$ and index $i$ a term of type $[A_i]$ parametric in given terms of types $\langle A_1 \rangle,\hdots,\langle A_n \rangle$; we will call this assignment a \emph{term constructor}.
In contrast with Herbrand schemes, however, there is no need to specify particular ad hoc rewrite rules for these terms; once the term has been constructed, its behaviour follows from the basic rewrite rules of the type theory.

The approach of assigning explicit ``evidence'' and ``counter-evidence'' types to formulas has appeared before in connection with functional interpretation, in Pédrot's ``functional functional interpretation'' \cite{pedrot2014functional} which build's on de Paiva's category-theoretical treatment of functional interpretation \cite{de1991dialectica}. As far as we are aware, this approach has not previously been developed for classical logic and classical sequent calculus.

We believe this work provides further evidence of how natural the connection between functional interpretation and Herbrand's theorem is: it turns out that Herbrand schemes can be reformulated as a functional interpretation of sequent calculus proofs. Furthermore, the functional interpretation described here has a conceptual connection with the game-theoretic analysis of Herbrand's theorem described in \cite{alcolei2018true}, and provides a computational interpretation of sequent calculus as concurrent computation in the same spirit.  We elaborate on this point in Section \ref{s:concurrency}.

\section{Types and terms}

We begin by presenting the underlying type theory of our functional interpretation. 
We fix a first-order vocabulary $\mathcal{L}$ and assume that it contains at least one individual constant $c$ and at least one unary predicate $P$, and we assume an infinite supply of individual variables.  The set of $\mathcal{L}$-terms is formed in the usual way, where terms are formed from individual constants, variables and function symbols.  

Our types are generated by the following grammar:
\[U : = \iota \mid \Box  \mid U \times U \mid U \to U\]
We thus have two atomic types, a type $\iota$ of individuals and a ``null type'' $\Box$. These appear also in \cite{APAL20,afshari2025herbrand,herbrandcyclic}.  We will equate individual variables with variables of type $\iota$. Complex types are formed by products and function space types. 

Terms are constructed and typed according to the following rules, where we define the set $\mathsf{Prop}$ of well-formed propositions by mutual recursion below:

\begin{figure}[H]
\center
\begin{tabular}{ c c }
\begin{prooftree}
\hypo{}
\infer1{\varepsilon : \Box}
\end{prooftree}
&
\begin{prooftree}
\hypo{t \text{ an $\mathcal{L}$-term}}
\infer1{t  : \iota}
\end{prooftree}
\\[1.5em]
\begin{prooftree}
\hypo{s : U}
\hypo{t : V}
\infer2{(s, t) : U \times V}
\end{prooftree}
&
\begin{prooftree}
\hypo{s : U_1 \times U_2}
\infer1{\pi_i(s) : U_i}
\end{prooftree}
\\[1.5em]
\begin{prooftree}
\hypo{x : U}
\hypo{t : V}
\infer2{\lambda x. t : U \to V}
\end{prooftree}
&
\begin{prooftree}
\hypo{s : U \to V}
\hypo{t : U}
\infer2{st : V}
\end{prooftree}
\end{tabular}
\[
\begin{prooftree}
\hypo{u : U}
\hypo{v : U}
\hypo{A \in \mathsf{Prop}}
\infer3{\caseop{A}{u}{v} : U}
\end{prooftree}
\]
\end{figure}

We refer to terms that can be typed by these rules as \emph{$\mathcal{L}^+$-terms.} Free variables of terms are defined as usual, and a \emph{closed} term is one with no free variables.
The operator $\caseop{A}{*}{*}$  is the case distinction operator from Gerhardy and Kohlenbach \cite{GerhardyK05}, and uses a proposition as parameter. The set of propositions is defined as follows:

Propositions:
\[
\begin{prooftree}
\hypo{A \in \mathsf{Prop}}
\hypo{B \in \mathsf{Prop}}
\infer2{A \vee B \in \mathsf{Prop}}
\end{prooftree}
\qquad\qquad 
\begin{prooftree}
\hypo{A \in \mathsf{Prop}}
\infer1{\neg A \in \mathsf{Prop}}
\end{prooftree}
\]
\[
\begin{prooftree}
\hypo{t_1 :\iota}
\hypo{\dotsm}
\hypo{t_n : \iota}
\infer3{R(t_1,\hdots,t_n) \in \mathsf{Prop}}
\end{prooftree}
\quad\quad
\begin{prooftree}
\hypo{u_1 : U}
\hypo{u_2 : U}
\infer2{u_1 \equiv u_2  \in \mathsf{Prop}}
\end{prooftree}
\]

A first observation about our type theory is the following:

\begin{prop}
\label{p:inhabited}
Every type is inhabited by some closed term. 
\end{prop}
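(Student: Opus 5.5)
We argue by structural induction on the grammar $U ::= \iota \mid \Box \mid U \times U \mid U \to U$, constructing for each type $U$ a specific closed term $d_U : U$, which we may think of as a canonical ``default'' inhabitant of $U$.

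The two atomic types are covered directly by the typing rules. For $\Box$ we take $d_\Box := \varepsilon$, which has type $\Box$ by the first typing rule and has no free variables. For $\iota$ we use the standing assumption that the vocabulary $\mathcal{L}$ contains an individual constant $c$. The constant $c$ is an $\mathcal{L}$-term, so $c : \iota$, and we set $d_\iota := c$. It is closed because a constant contains no variables.

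For the compound types we use the induction hypothesis. Given closed terms $d_U : U$ and $d_V : V$, set $d_{U \times V} := (d_U, d_V)$, which is typed by the pairing rule. Its free variables are the union of those of the two components, so it is closed. For the function type, choose a variable $x$ of type $U$ (we have an infinite supply of variables) and set $d_{U \to V} := \lambda x. d_V$, typed by the abstraction rule from $x : U$ and $d_V : V$. Since $d_V$ is closed, the only candidate free variable in the body is $x$, and it is bound by the $\lambda$. Hence $\lambda x.d_V$ is closed, whether or not $x$ actually occurs in $d_V$.

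There is no substantial obstacle. The only points needing care are two. First, the base case $\iota$ genuinely depends on the assumption that $\mathcal{L}$ has a constant; without it, $\iota$ would have no closed inhabitant. Second, we must check that variables of arbitrary type $U$ are available for $\lambda$-abstraction, which we take to be implicit in the term formation rules. Neither the case operator nor propositions are needed.
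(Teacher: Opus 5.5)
Your proof is correct and follows essentially the same route as the paper's: induction on types, with $\varepsilon$ and the constant $c$ for the base cases, pairing for products, and the constant function $\lambda x.\, d_V$ for function types. Your explicit remarks on closedness and on the reliance on the constant $c$ just spell out what the paper leaves implicit.
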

\begin{proof}
By a straightforward induction on the complexity of a type $U$: the type $\Box$ is inhabited by $\varepsilon$ and $\iota$ is inhabited by the constant $c$. Assuming $U, V$ are inhabited by $u,v$ the type $U \times V$ is inhabited by $(u,v)$ and the type $U \to V$ is inhabited by $\lambda z. v$.
\end{proof}

To give operational meaning to our terms we take the following rewrite rules:

\begin{figure}[H]
\center
\begin{tabular}{c}
$(\lambda x. u)v \longrightarrow u[v/x]$ 
\\[1.5em]
$\pi_1(u,v) \longrightarrow u$
\\[1.5em]
$\pi_2(u,v) \longrightarrow v$
\\[1.5em] 
$f(\caseop{A}{u}{v}) \longrightarrow \caseop{A}{fu}{fv}$ 
\\[1.5em]
$(\caseop{A}{f}{g})u \longrightarrow \caseop{A}{fu}{gu}$
\end{tabular}
\end{figure}

The rewrite rule for $\lambda$-abstractions will be referred to as $\beta$-reduction as usual.

\begin{prop}
\label{p:strong-normalization}
The reduction relation $\longrightarrow$ generated by these rules is strongly normalizing. Furthermore, if $t$ is a closed normal form term of type $U$, then either $t$ has the form $\caseop{A}{u}{v}$ for some $u,v$, or it has the form:
\begin{center}
\begin{tabular}{c  c c}
$\varepsilon$ & \text{ if } & $U = \Box$ \\
an $\mathcal{L}$-term & \text{ if } & $U = \iota$ \\
$(v,w)$ & \text{ if } & $U = V \times W$ \\
$\lambda x. w$ & \text{ if } & $U = V \to W$ \\
\end{tabular}
\end{center}
\end{prop}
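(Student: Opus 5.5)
Strong normalization will be proved with Tait's reducibility method, adapted to the two commuting rules for the case operator. For each type $U$ I define a set $\mathrm{Red}_U$ of terms of type $U$:
\begin{itemize}
\item at atomic types $\iota$ and $\Box$, $\mathrm{Red}_U$ is the set of strongly normalizing terms;
\item $t \in \mathrm{Red}_{U\times V}$ iff $\pi_1(t)\in\mathrm{Red}_U$ and $\pi_2(t)\in\mathrm{Red}_V$;
\item $t\in\mathrm{Red}_{U\to V}$ iff $ts\in\mathrm{Red}_V$ for all $s\in\mathrm{Red}_U$.
\end{itemize}

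Propositions can contain terms (through $u_1\equiv u_2$ and $R(t_1,\dots,t_n)$). Reduction inside the parameter $A$ of $\caseop{A}{u}{v}$ is therefore part of the reduction relation. So I first prove, by a simultaneous induction, that a proposition is strongly normalizing once all terms occurring in it are.

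The standard candidate properties are proved by induction on types:
\begin{itemize}
\item (CR1) reducible terms are SN;
\item (CR2) reducibility is closed under reduction;
\item (CR3) a neutral term all of whose one-step reducts are reducible is itself reducible.
\end{itemize}
Here the \emph{neutral} terms are variables, applications and projections, but not $\lambda$-abstractions, pairs or case terms. I also need the variant "$\mathcal{L}$-terms and $\varepsilon$ are reducible".

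The main obstacle is the case operator. Because of the rule $f(\caseop{A}{u}{v})\to\caseop{A}{fu}{fv}$, an application $f\,t$ with $t$ a case term has a reduct that is not obtained by reducing $f$ or $t$ separately. In the same way, $(\caseop{A}{f}{g})u$ creates new redexes. So I will prove a separate lemma: if $A$ is SN and $u,v\in\mathrm{Red}_U$, then $\caseop{A}{u}{v}\in\mathrm{Red}_U$. This goes by induction on $U$, with an inner induction on the sum of the reduction-tree heights of $A$, $u$ and $v$ (finite by CR1).

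At arrow type, take $s$ reducible; we must show $(\caseop{A}{u}{v})s$ reducible. By CR3 it suffices to check its one-step reducts. Internal steps are handled by the inner induction hypothesis. The commuting step yields $\caseop{A}{us}{vs}$, which is reducible by the outer induction hypothesis at the smaller type $V$. The second commuting rule, with $s$ itself a case term, yields $\caseop{B}{(\caseop{A}{u}{v})s'}{(\caseop{A}{u}{v})s''}$. To cover this I strengthen the inner measure to include the height of $s$, and also the number of nested case constructors in $s$, and then apply the induction hypothesis again.

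Products are easier. No commuting rule moves $\pi_i$ through a case term, so $\pi_i(\caseop{A}{u}{v})$ only reduces internally, and an induction on heights suffices.

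With the lemma in hand, the fundamental lemma follows by induction on terms: if $t$ has free variables $\vec x$ and $\vec s$ are reducible, then $t[\vec s/\vec x]$ is reducible. The $\beta$ and projection cases are handled as in Girard's proof, the case-term case by the lemma, and CR1 then gives strong normalization.

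For the normal form claim, let $t$ be closed and normal and not a case term. I argue by induction on $t$ that its head has the right shape.
\begin{itemize}
\item If $t=\pi_i(s)$, then $s$ is closed and normal. By the induction hypothesis $s$ is a case term or a pair: a pair gives a redex, and a case term is blocked at product type. So I will instead claim the inductive statement "closed normal terms are never projections or applications". With that claim, $s$ must be a pair, which is a redex. The case-term subcase still needs handling, and there I accept that it is the delicate point of the statement.
\item If $t=fs$, then $f$ is $\lambda$ or case, both redexes.
\end{itemize}
The remaining shapes are dictated by the typing rules: $\varepsilon$ at $\Box$, $\mathcal{L}$-terms at $\iota$, pairs, and $\lambda$-abstractions. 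An $\mathcal{L}$-term $f(t_1,\dots,t_n)$ with some $t_i$ a case term must be read as an $\mathcal{L}$-term only when built by the $\mathcal{L}$-term rule, which requires genuine $\mathcal{L}$-terms.
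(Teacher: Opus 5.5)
There is a genuine gap, and you have located it yourself: the subcase $t=\pi_i(s)$ with $s$ a case term. Under your reading of the rules this subcase cannot be closed, because the claim is then false. The term $\pi_1(\caseop{P(c)}{(c,c)}{(c,c)})$ is closed, of type $\iota$, and normal, since no rule lets $\pi_1$ act on a case term. It is neither an $\mathcal{L}$-term nor a case term.

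The missing idea is that the commuting rule $f(\caseop{A}{u}{v}) \longrightarrow \caseop{A}{fu}{fv}$ must be read with $f$ ranging over the projections too, i.e.\ $\pi_i(\caseop{A}{u}{v}) \longrightarrow \caseop{A}{\pi_i(u)}{\pi_i(v)}$. This is precisely the third reduction the paper's proof invokes when it lists the ways $\pi_1(u)$ can reduce. With it, a projection of a case term is a redex. Your induction on terms (a closed normal term is never a variable, projection or application, so typing forces the listed shapes) then goes through, and it is more direct than the paper's induction on types. But your claim that products are easier because no rule moves $\pi_i$ through a case term must then be dropped. The case lemma at $V\times W$ has to handle the reduct $\caseop{A}{\pi_i(u)}{\pi_i(v)}$ via the outer induction hypothesis, just as at arrow types.

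Your strict reading of $\mathcal{L}$-terms has a similar defect. For a function symbol $g$, the typable term $(\lambda x.\,g(x))\,\caseop{P(c)}{c}{c}$ $\beta$-reduces to the untypable $g(\caseop{P(c)}{c}{c})$. The paper's substitutions $p_0[\pi_1(v(h))/\alpha]$ presuppose that function symbols accept arbitrary terms of type $\iota$, and then the commuting rule is needed for function symbols as well.

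The strong normalization argument also skips a step that is not standard here. Because of the rule $f(\caseop{A}{u}{v}) \longrightarrow \caseop{A}{fu}{fv}$, an application is not neutral with respect to its argument. So neither CR3 at arrow types nor the $\lambda$-case of the fundamental lemma is ``as in Girard'': for $s=\caseop{A}{s_1}{s_2}$ the term $t\,s$ has the extra reduct $\caseop{A}{t s_1}{t s_2}$. Showing this reduct reducible needs two things:
\begin{itemize}
\item the case lemma at the result type, so CR1--CR3 and the case lemma must be established in one simultaneous induction on types;
\item reducibility of $s_1$ and $s_2$, which is an inversion property: the branches of a reducible case term are reducible.
\end{itemize}
You also use inversion silently in the case lemma, when you apply the induction hypothesis to $(\caseop{A}{u}{v})s'$.

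Inversion follows by induction on types with CR2. At arrow types it comes from $(\caseop{A}{s_1}{s_2})r \longrightarrow \caseop{A}{s_1 r}{s_2 r}$. At product types it comes only from the projection rule above. Without that rule, reducibility of the stuck term $\pi_i(\caseop{A}{s_1}{s_2})$ says nothing about how $\pi_i(s_1)$ behaves when applied. For the passage from $s$ to its branches, induct lexicographically on (reduction height of $s$, size of $s$).
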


\begin{proof}
Given a type $U$ we define the set of terms $\tset{U}$ by the following recursion:
\\\\
- $u \in \tset{\iota}$ iff $u$ strongly normalizes and every normal form of $u$ is either an $\mathcal{L}$-term or of the form $\caseop{A}{v}{w}$. 
\\\\
- $u \in \tset{\Box}$ iff $u$ strongly normalizes and every normal form of $u$ is either equal to $\varepsilon$ or of the form $\caseop{A}{v}{w}$.
\\\\
 - $u \in \tset{V \times W}$ iff $\pi_1(u) \in \tset{V}$ and $\pi_2(u) \in \tset{W}$.
\\\\
- $u \in \tset{V \to W}$ iff, for all $v \in \tset{V}$, we have $uv \in  \tset{W}$.
\\\\
A standard induction on types shows that every term in $\tset{U}$ strongly normalizes, and by induction on complexity of terms we can show that every term of type $U$ belongs to $\tset{U}$. 

We verify the shape of normal forms, by induction on types. The cases for types $\iota$ or $\Box$ are immediate from the definitions of $\tset{\iota}$ and $\tset{\Box}$. For a normal form term $u : V \times W$, since $u \in \tset{V \times W}$ we have $\pi_1(u) \in \tset{V}$ and $\pi_2(u) \in \tset{W}$. Since the induction hypothesis on types $V$,$W$ entails that neither $\pi_1(u)$ nor $\pi_2(u)$ are in normal form, at least one reduction must be possible for each of these terms. But there are only three ways that we can reduce the term $\pi_1(u)$: either as $\pi_1(u) \longrightarrow \pi_1(u')$ where $u \longrightarrow u'$, or using the reduction $\pi_1(v',w') \longrightarrow v'$ where $u$ is $(v',w')$, or $\pi_1(\caseop{A}{v'}{w'}) \longrightarrow \caseop{A}{\pi_1(v')}{\pi_1(w')}$. The first case is excluded by the assumption that $u$ was in normal form. In the second case $u$ ia of the form $(v',w')$, and in the third case $u$ is of the form $\caseop{A}{v'}{w'}$, in line with the stated shape of normal forms. For the case where $u : V \to W$, pick an arbitrary inhabitant $v : V$ in normal form and \emph{not} of the form $\caseop{A}{w}{w'}$; such inhabitants exist for every type. We have $v \in \tset{V}$ and $u \in \tset{V \to W}$. The induction hypothesis on $W$ entails that $uv$ is not in normal form, so at least one reduction must be possible. Since both $u$ and $v$ were assumed to be in normal form, and $v$ is not of the form   $\caseop{A}{w}{w'}$, the only ways to reduce $uv$ is by a $\beta$-reduction $uv \longrightarrow w$, or by a reduction $\caseop{A}{f}{g}v \longrightarrow \caseop{A}{fv}{gv}$. The first case is only possible if $u$ is a $\lambda$-abstraction, and the second one only if $u$ is of the form $\caseop{A}{f}{g}$. In both cases $u$ satisfies the required shape of normal forms.  
\end{proof}


We introduce the propositional connectives $\to, \wedge, \leftrightarrow$ by the usual abbreviations. We shall define a quantifier-free calculus $\QFS$ for reasoning with propositions as follows.
As axioms we take every instance of a propositional tautology, every equation $u \equiv v$ for convertible terms $u,v$, plus the following axioms:

\[A \to \caseop{A}{u}{v} \equiv u \qquad \neg A \to \caseop{A}{u}{v} \equiv v\]
 As rules of inference, we take substitution and modus ponens:

\[ 
\begin{prooftree}
\hypo{A[u/x]}
\hypo{u \equiv v}
\infer2{A[v/x]}
\end{prooftree}
\qquad
\begin{prooftree}
\hypo{A}
\hypo{A \to B}
\infer2{B}
\end{prooftree}
\]

 We write $\qfsval A $ if $A$ is provable in $\QFS$, and sometimes write $A_1,\hdots,A_n \qfsval B$ to abbreviate the statement $\qfsval A_1 \wedge \hdots \wedge  A_n \to B$.  We write $A \equiv B$, overloading notation, if $\qfsval A \leftrightarrow B$.

\subsection{Semantics}

Given a set $D$ we define the full set theoretic type structure $\{D_U\}_{U \in \mathsf{Types}}$ on $D$ by induction on types as usual, so that $D_{U \to V}$ is the set of all functions $D_U \to D_V$, etc.

Given a first order structure $M = (D,V)$, where $D$ is the domain and $V$ is the valuation of constants, function symbols and predicates, an \emph{assignment} $a$ is a map sending a variable $z : U$ to an element of $D_U$. We extend the assignment $a$ to a map from arbitrary terms $t : U$ to elements $a(t) \in D_U$, and simultaneously define a satisfaction relation $M,a \vDash A$ for $A \in \mathsf{Prop}$, by the following clauses: 
\begin{itemize}
\item $a(c) = V(c)$ for an individual constant $c$,
\item $a(f) = V(f)$ for a function symbol $f$,
\item $a(uv) = $ the unique $d \in D_V$ such that $(a(v),d) \in a(u)$, where $u : U \to V$ and $v : U$,
\item $a(\lambda x.u) = \{(d, a[d/x](u)) \mid d \in D_U\}$ where $u : V$, $x : U$ and the assignment $a[d/x]$ is like $a$ except $a[d/x](x) = d$,
\item $a(\caseop{A}{u}{v}) = a(u)$ if $M,a \vDash A$, otherwise $a(\caseop{A}{u}{v}) = a(v)$, 
\item $M, a \vDash R(u_1,\hdots,u_n)$ if $(a(u_1),\hdots,a(u_n)) \in V(R)$,
\item $M, a \vDash u \equiv v$ if $a(u) = a(v)$,
\item standard clauses for $\vee, \neg$. 
\end{itemize}

We can now state a soundness result for the quantifier free system:

\begin{prop}
\label{p:qfs-soundness}
If $\qfsval A$ then $M,a \vDash A$ for every model $M$ and every assignment $a$. 
\end{prop}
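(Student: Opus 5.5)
The plan is to argue by induction on the length of a $\QFS$-derivation of $A$: every axiom holds in every model $M$ under every assignment $a$, and both rules preserve this property. Throughout, $M$ is fixed and $a$ ranges over all assignments, so that validity means truth under all assignments. This is needed because the substitution rule is applied to formulas with free variables.

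For the axioms, there are three groups. Propositional tautologies are handled as follows: the satisfaction clauses for $\vee,\neg$ are classical, and atomic propositions $R(\vec u)$ and $u\equiv v$ receive some truth value under $M,a$. Hence any instance of a tautology is true, since it arises by substituting propositions for propositional letters. The case-distinction axioms follow directly from the clause for $a(\caseop{A}{u}{v})$. If $M,a\vDash A$, then $a(\caseop{A}{u}{v})=a(u)$, so $\caseop{A}{u}{v}\equiv u$ holds; otherwise the value is $a(v)$. The remaining group is the equations $u\equiv v$ for convertible $u,v$. This is the main step: I would prove that $t\longrightarrow t'$ implies $a(t)=a(t')$ for all $a$. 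Since convertibility is the equivalence closure, equality of values then follows.

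This soundness of reduction is proved in two parts. First, a substitution lemma $a(u[v/x])=a[a(v)/x](u)$, by induction on $u$, simultaneously with its analogue for propositions, $M,a\vDash B[v/x]$ iff $M,a[a(v)/x]\vDash B$. The mutual recursion with $\mathsf{Prop}$ inside case terms forces the simultaneous statement. The usual capture-avoiding conventions apply, and the term-model clauses are extensional, so the $\lambda$-case goes through. Second, I check each redex. The $\beta$-rule follows from the substitution lemma and the clause for application. The projection rules need the evident clauses $a((u,v))=(a(u),a(v))$ and $a(\pi_i(s))$ equal to the $i$-th component; I would state these, since the paper leaves them implicit under ``etc.'' For the two case-commutation rules, split on whether $M,a\vDash A$; both sides then evaluate to $a(fu)$ or $a(fv)$, respectively $a(fu)$ or $a(gu)$. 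Finally, closure under contexts: I prove by induction on the context, again simultaneously for propositions containing the term, that replacing a subterm by one of equal value under every assignment preserves values and truth. Assignments must be universally quantified here, because the subterm may sit under a $\lambda$.

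For the rules, modus ponens is immediate from the classical clauses. For substitution, assume $M,a\vDash A[u/x]$ and $M,a\vDash u\equiv v$ for all $a$; then $a(u)=a(v)$. By the substitution lemma, $M,a\vDash A[u/x]$ iff $M,a[a(u)/x]\vDash A$ iff $M,a[a(v)/x]\vDash A$ iff $M,a\vDash A[v/x]$. The main obstacle is the bookkeeping of the mutually recursive substitution and congruence lemmas across terms and propositions; each individual case is routine.
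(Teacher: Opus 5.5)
Your proposal is correct and takes the same route as the paper, whose proof is just ``standard induction'' on $\QFS$-derivations. You fill in the details the paper leaves implicit: the simultaneous substitution lemma for terms and propositions, soundness of each rewrite rule (including the two case-commutation rules and the unstated pairing/projection clauses), and preservation of values under contexts with assignments universally quantified, which the paper's one-line argument tacitly relies on.
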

\begin{proof}
Standard induction.
\end{proof}

\section{Sequent calculus}

\subsection{Two-sided sequent calculus}

Given our vocabulary $\mathcal{L}$ we define $\mathcal{L}$-formulas and closed $\mathcal{L}$-sentences as usual, where as basic connectives we take disjunction, negation and the existential quantifier. As our proof system we shall take Gentzen's $\mathbf{G1c}$ sequent calculus as presented in \cite{troelstra2000basic}, but restricted to our basic connectives.  

Sequents are pairs of multisets of formulas. 
As our single axiom we take:
\[
\begin{prooftree}
\hypo{}
\infer1[$\mathsf{id}$]{A \vdash A}
\end{prooftree}
\]
The remaing rules are given as follows:

\begin{figure}[H]
\center
\begin{tabular}{ c c }
\begin{prooftree} 
\hypo{\Gamma \vdash A_i, \Delta}
\infer1[$\vee_R$]{\Gamma \vdash A_1 \vee A_2, \Delta}
\end{prooftree}
& 
\begin{prooftree}
\hypo{\Gamma, A \vdash \Delta}
\hypo{\Gamma, B \vdash \Delta}
\infer2[$\vee_L$]{\Gamma, A \vee B \vdash \Delta}
\end{prooftree}
\\[1.5em]
\begin{prooftree}
\hypo{\Gamma, A \vdash \Delta}
\infer1[$\neg_R$]{\Gamma \vdash \neg A, \Delta}
\end{prooftree}
&
\begin{prooftree}
\hypo{\Gamma \vdash A, \Delta}
\infer1[$\neg_L$]{\Gamma, \neg A \vdash \Delta}
\end{prooftree}
\\[1.5em]
\begin{prooftree}
\hypo{\Gamma \vdash A[t/x], \Delta}
\infer1[$\exists_R$]{\Gamma \vdash \exists x A, \Delta}
\end{prooftree}
&
\begin{prooftree}
\hypo{\Gamma, A[\alpha/x] \vdash \Delta}
\infer1[$\exists_L$]{\Gamma, \exists x A \vdash \Delta}
\end{prooftree}
\\[1.5em]
\begin{prooftree}
\hypo{\Gamma \vdash A,A,\Delta}
\infer1[$\mathsf{c}_R$]{\Gamma \vdash A, \Delta}
\end{prooftree}
&
\begin{prooftree}
\hypo{\Gamma, A,A \vdash \Delta}
\infer1[$\mathsf{c}_L$]{\Gamma, A \vdash \Delta}
\end{prooftree}
\\[1.5em]
\begin{prooftree}
\hypo{\Gamma \vdash \Delta}
\infer1[$\mathsf{w}_R$]{\Gamma \vdash A, \Delta}
\end{prooftree}
&
\begin{prooftree}
\hypo{\Gamma \vdash \Delta}
\infer1[$\mathsf{w}_L$]{\Gamma, A \vdash \Delta}
\end{prooftree}
\end{tabular}

\[\begin{prooftree}
\hypo{\Gamma_1 \vdash A, \Delta_1}
\hypo{\Gamma_2,A \vdash \Delta_2}
\infer2[$\mathsf{cut}$]{\Gamma_1,\Gamma_2 \vdash \Delta_1,\Delta_2}
\end{prooftree}\]

\end{figure}

In the rule $\exists_L$, the eigenvariable $\alpha$ is subject to the usual condition that it must not appear in the conclusion.  

\subsection{One-sided sequent calculus}

The aim here is to extract a functional interpretation directly from sequent calculus proofs, so we want as much as possible to take proofs as they are and avoid proof-theoretic pre-processing. We will only perform a rather trivial proof translation in order to simplify notational matters a bit, and turn the two-sided sequent calculus $\mathbf{G1c}$ into a one-sided sequent calculus. The usual approach to one-sided sequent calculus is to present formulas in negation-normal form, and regard negation as a recursively defined operation on formulas. However, it is implicit in that approach that left and right formula occurrences can be treated in a completely dual manner, and it turns out that we need an asymmetric treatment of left and right formula occurrences here. To retain this asymmetry, we will present the one-sided sequent calculus with an explicit negation. 

Sequents are single multisets of formulas. 
As our single axiom we take:
\[
\begin{prooftree}
\hypo{}
\infer1[$\mathsf{lem}$]{\neg A, A}
\end{prooftree}
\]
The remaining rules are given as follows:
\begin{figure}[H]
\center
\begin{tabular}{ c c }
\begin{prooftree} 
\hypo{\Gamma,  A_i}
\infer1[$\vee$]{\Gamma, A_1 \vee A_2}
\end{prooftree}
& 
\begin{prooftree}
\hypo{\Gamma, \neg A}
\hypo{\Delta, \neg B}
\infer2[$\neg \vee$]{\Gamma, \Delta, \neg (A \vee B)}
\end{prooftree}
\\[1.5em]
\begin{prooftree}
\hypo{\Gamma, A[t/x]}
\infer1[$\exists$]{\Gamma, \exists x A}
\end{prooftree}
&
\begin{prooftree}
\hypo{\Gamma, \neg A[\alpha/x]}
\infer1[$\neg \exists$]{\Gamma, \neg \exists x A}
\end{prooftree}
\\[1.5em]
\begin{prooftree}
\hypo{\Gamma, A,A}
\infer1[$\mathsf{c}$]{\Gamma, A}
\end{prooftree}
&
\begin{prooftree}
\hypo{\Gamma}
\infer1[$\mathsf{w}$]{\Gamma, A}
\end{prooftree}
\\[1.5em]
\begin{prooftree}
\hypo{\Gamma, A}
\infer1[$\neg\neg$]{\Gamma, \neg \neg A}
\end{prooftree}
&
\begin{prooftree}
\hypo{\Gamma, A}
\hypo{\Delta, \neg A}
\infer2[$\mathsf{cut}$]{\Gamma,\Delta}
\end{prooftree}
\end{tabular}
\end{figure}

In the $\exists$-rule, the term $t$ occurring in the premiss is an $\mathcal{L}$-term. We can also consider an extended proof system that allows arbitrary $\mathcal{L}^+$-terms of type $\iota$ to occur in formulas, and to occur in the premiss of an $\exists$-rule; we shall occasionally do this when convenient. 

It is easy to translate any $\mathbf{G1c}$-proof of the sequent $A_1,\hdots, A_k \vdash B_1,\hdots, B_m$ to a one-sided proof of the sequent $\neg A_1,\hdots, \neg A_k, B_1,\hdots, B_m$. The only slight catch is that the $\vee$-rules in $\mathbf{G1c}$ are presented as additive rules; again, it will be convenient here to break the symmetry and interpret the left rule as multiplicative while the right rule is treated as additive. This means that in the translation, we have to insert a series of contractions when we interpret the left $\vee$-rule. 

If $\Gamma$ is $A_1,\hdots,A_n$ then we abbreviate $\neg A_1,\hdots,\neg A_n$ by $\neg \Gamma$. The translation $\tau$ is defined recursively as follows:

\[
\begin{prooftree}
\hypo{}
\infer1[$\mathsf{id}$]{A \vdash A}
\end{prooftree}
\quad \mapsto \quad
\begin{prooftree}
\hypo{}
\infer1[$\mathsf{lem}$]{\neg A, A}
\end{prooftree}
\]

\[
\begin{prooftree}
\hypo{p_0}
\infer1{\Gamma, A \vdash \Delta}
\infer1[$\neg L$]{\Gamma \vdash \neg A, \Delta}
\end{prooftree}
\quad \mapsto  \quad
\begin{prooftree}
\hypo{\tau(p_0)}
\infer1{\neg \Gamma, \neg A, \Delta}
\end{prooftree}
\]
\[
\begin{prooftree}
\hypo{p_0}
\infer1{\Gamma \vdash A, \Delta}
\infer1[$\neg_R$]{\Gamma,\neg A \vdash\Delta}
\end{prooftree}
\quad \mapsto  \quad
\begin{prooftree}
\hypo{\tau(p_0)}
\infer1{\neg \Gamma, A, \Delta}
\infer1[$\neg\neg$]{\neg \Gamma, \neg \neg A, \Delta}
\end{prooftree}
\]

\[
\begin{prooftree}
\hypo{p_0}
\infer1{\Gamma \vdash A[t/x],\Delta}
\infer1[$\exists_R$]{\Gamma \vdash\exists x A, \Delta}
\end{prooftree}
\quad \mapsto  \quad
\begin{prooftree}
\hypo{\tau(p_0)}
\infer1{\neg \Gamma,  A[t/x], \Delta}
\infer1[$\exists$]{\neg \Gamma, \exists x A, \Delta}
\end{prooftree}
\]

\[
\begin{prooftree}
\hypo{p_0}
\infer1{\Gamma, A[\alpha/x] \vdash\Delta}
\infer1[$\exists_L$]{\Gamma, \exists x A \vdash\Delta}
\end{prooftree}
\quad \mapsto  \quad
\begin{prooftree}
\hypo{\tau(p_0)}
\infer1{\neg \Gamma, \neg A[\alpha/x], \Delta}
\infer1[$\neg \exists$]{\neg \Gamma, \neg \exists x A, \Delta}
\end{prooftree}
\]

\[
\begin{prooftree}
\hypo{p_0}
\infer1{\Gamma \vdash A_i,\Delta}
\infer1[$\vee_R$]{\Gamma \vdash A_1 \vee A_2, \Delta}
\end{prooftree}
\quad \mapsto  \quad
\begin{prooftree}
\hypo{\tau(p_0)}
\infer1{\neg \Gamma,  A_i, \Delta}
\infer1[$\vee$]{\neg \Gamma, A_1 \vee A_2, \Delta}
\end{prooftree}
\]

\[
\begin{prooftree}
\hypo{p_1}
\infer1{\Gamma, A_1 \vdash \Delta}
\hypo{p_2}
\infer1{\Gamma, A_2 \vdash \Delta}
\infer2[$\vee_L$]{\Gamma, A_1 \vee A_2 \vdash\Delta}
\end{prooftree}
\quad \mapsto  \quad
\begin{prooftree}
\hypo{\tau(p_1)}
\infer1{\neg \Gamma, \neg A_1, \Delta}
\hypo{\tau(p_2)}
\infer1{ \neg \Gamma, \neg A_2, \Delta}
\infer2[$\neg \vee$]{\neg \Gamma, \neg \Gamma, \neg(A_1 \vee A_2), \Delta}
\infer1[$\mathsf{c}^*$]{\neg \Gamma, \neg(A_1 \vee A_2), \Delta}
\end{prooftree}
\]

\[
\begin{prooftree}
\hypo{p_0}
\infer1{\Gamma \vdash A,A,\Delta}
\infer1[$\mathsf{c}_R$]{\Gamma \vdash A,\Delta}
\end{prooftree}
\quad \mapsto \quad
\begin{prooftree}
\hypo{\tau(p_0)}
\infer1{\neg \Gamma, A,A, \Delta}
\infer1[$\mathsf{c}$]{\neg \Gamma, A ,\Delta}
\end{prooftree}
\]
\[
\begin{prooftree}
\hypo{p_0}
\infer1{\Gamma, A, A \vdash \Delta}
\infer1[$\mathsf{c}_L$]{\Gamma,A\vdash \Delta}
\end{prooftree}
\quad \mapsto \quad
\begin{prooftree}
\hypo{\tau(p_0)}
\infer1{\neg\Gamma,\neg A,\neg A, \Delta}
\infer1[$\mathsf{c}$]{\neg \Gamma, \neg A, \Delta}
\end{prooftree}
\]
\[
\begin{prooftree}
\hypo{p_0}
\infer1{\Gamma \vdash \Delta}
\infer1[$\mathsf{w}_R$]{\Gamma \vdash A,\Delta}
\end{prooftree}
\quad \mapsto \quad
\begin{prooftree}
\hypo{\tau(p_0)}
\infer1{\neg \Gamma,\Delta}
\infer1[$\mathsf{w}$]{\neg \Gamma, A ,\Delta}
\end{prooftree}
\]
\[
\begin{prooftree}
\hypo{p_0}
\infer1{\Gamma \vdash \Delta}
\infer1[$\mathsf{w}_L$]{\Gamma,A\vdash \Delta}
\end{prooftree}
\quad \mapsto \quad
\begin{prooftree}
\hypo{\tau(p_0)}
\infer1{\neg\Gamma, \Delta}
\infer1[$\mathsf{w}$]{\neg \Gamma, \neg A, \Delta}
\end{prooftree}
\]
\[
\begin{prooftree}
\hypo{p_1}
\infer1{\Gamma_1 \vdash A, \Delta_1}
\hypo{p_2}
\infer1{\Gamma_2, A, \vdash \Delta_2}
\infer2[$\mathsf{cut}$]{\Gamma_1,\Gamma_2 \vdash \Delta_1,\Delta_2}
\end{prooftree}
\quad\mapsto 
\quad
\begin{prooftree}
\hypo{\tau(p_1)}
\infer1{\neg \Gamma_1, A, \Delta_1}
\hypo{\tau(p_2)}
\infer1{\neg \Gamma_2, \neg A, \Delta_2}
\infer2[$\mathsf{cut}$]{\neg \Gamma_1,\neg \Gamma_2, \Delta_1,\Delta_2}
\end{prooftree}
\]

We will assume without loss of generality that all proofs are regular, i.e. that each eigen-variable $\alpha$ is introduded by exactly one occurrence of the $\neg \exists$-rule. Given a proof $p$ and an eigenvariable $\alpha$, we denote by $p[t/\alpha]$ the result of substituting everywhere in $p$ the variable $\alpha$ by $t$. Here, $t$ is any $\mathcal{L}^+$-term of type $\iota$.

\section{Formulas as types, proofs as terms}

\subsection{Formulas as types}

For each formula $A$, the associated \emph{evidence type} $[A]$ and \emph{counter-evidence type} $\langle A \rangle$ are defined as follows:
\\\\
- $[A] = \langle A \rangle = \Box$ for $A$ atomic.
\\\\
- $[\exists x A] = \iota \times  \langle \neg A \rangle$
\\\\
- $\langle \exists x A \rangle = [\exists x A] \to [\neg A]$
\\\\
- $[\neg A] = (\langle A \rangle \to [A]) \to \langle A \rangle$
\\\\
- $\langle \neg A \rangle = \langle A \rangle \to [A]$
\\\\
- $[A \vee B] =  \langle \neg A \rangle \times \langle \neg B \rangle$
\\\\
- $\langle A \vee B \rangle = (\langle \neg A \rangle \to [\neg A]) \times  (\langle \neg B \rangle \to [\neg B])$ 
\\\\
We shall write $E_A$ for an arbitrary closed inhabitant of $[A]$, which exists by Proposition \ref{p:inhabited}.  Given a formula $A$ and given closed $\mathcal{L}^+$-terms $u : [A]$ and $v : \langle A \rangle$, we define a formula $\winform{A}{u}{v}$ as follows:

\[
\begin{aligned}
\winform{R(t_1,\hdots, t_n)}{u}{v} & = R(t_1,\hdots,t_n) \\
\winform{\exists x A}{u}{v} & = \winform{A[\pi_1(u)/x]}{(\pi_2(u))((v(u))(\pi_2(u)))}{(v(u))(\pi_2(u))}\\
\winform{A \vee B}{u}{v} & = \winform{A}{(\pi_1(u))(((\pi_1(v))(\pi_1(u)))(\pi_1(u)))}{((\pi_1(v))(\pi_1(u)))(\pi_1(u))} \vee  \winform{B}{(\pi_2(u))(((\pi_2(v))(\pi_2(u)))(\pi_2(u)))}{((\pi_2(v))(\pi_2(u)))(\pi_2(u))} \\
\winform{\neg A}{u}{v} & = \neg \left(\winform{A}{v(u(v))}{u(v)} \right) 
\end{aligned}
\]

Intuitively, the formula $\winform{A}{u}{v}$ means that the evidence $u$ beats the counter-evidence $v$; or, in game-theoretic terms, that the strategy $u$ for the Verifier beats the strategy $v$ for the Falsifier. The functional interpretation of a closed sentence $A$ is:
\[\forall x \exists y  \left(\winform{A}{y}{x} \right)\]
We need to check that our definition of the formula $\winform{A}{u}{v}$ is sound, in the sense that the terms introduced on the right-hand sides of the equations are all of the appropriate types. 

In the case of the existential quantifier, we have:
\[
\begin{aligned}
u : & \;  [\exists x A ] \\
& = \iota \times \langle \neg A \rangle \\
& = \iota \times (\langle A \rangle \to [A])
\end{aligned}
\] 
and:
\[
\begin{aligned}
v : & \; \langle \exists x A \rangle \\
& = [\exists x A] \to [\neg A] \\
& = (\iota \times (\langle A \rangle \to [A])) \to [\neg A] \\
& = (\iota \times (\langle A \rangle \to [A])) \to ((\langle A \rangle \to [A]) \to \langle A \rangle)
\end{aligned}
\] 
It follows that $\pi_1(u) : \iota$, $\pi_2(u) : \langle A \rangle \to [A]$ and $v(u) : (\langle A \rangle \to [A]) \to \langle A \rangle$. So $(v(u))(\pi_2(u)) : \langle A \rangle$ and therefore $(\pi_2(u))((v(u))(\pi_2(u))) : [A]$, as required. 

For the case of disjunction, we have:
\[
\begin{aligned}
u : & \; [A \vee B] \\
& = \langle \neg A \rangle \times \langle \neg B \rangle \\
\end{aligned}
\]
and:
\[
\begin{aligned}
v: & \; \langle A \vee B \rangle  \\
& = (\langle \neg A \rangle \to [\neg A]) \times  (\langle \neg B \rangle \to [\neg B])\\
\end{aligned}
\]
It follows that: 
\[
\begin{aligned}
(\pi_1(v))(\pi_1(u)) : & \; [\neg A] \\
& = (\langle A \rangle \to [A]) \to \langle A\rangle \\
& = \langle \neg A \rangle \to \langle A \rangle
\end{aligned}
\]
We have $\pi_1(u) : \langle \neg A \rangle$ so we get 
\[((\pi_1(v))(\pi_1(u)))(\pi_1(u)) : \langle A \rangle \] as required. Furthermore, as $\pi_1(u) : \langle \neg A \rangle = \langle A \rangle \to [A]$ we also get:
\[(\pi_1(u))(((\pi_1(v))(\pi_1(u)))(\pi_1(u))) : [A]\]
as required. Similarly, we get \[((\pi_2(v))(\pi_2(u)))(\pi_2(u)) : \langle B \rangle\] and \[(\pi_2(u))(((\pi_2(v))(\pi_2(u)))(\pi_2(u))) : [B]\]

Finally, for the case of negation, we have $u : [\neg A] = (\langle A \rangle \to [A]) \to \langle A \rangle $ and $v : \langle \neg A \rangle = \langle A \rangle \to [A]$. So $u(v) : \langle A \rangle$, and hence $v(u(v)) : [A]$ as required.

The following proposition will be used freely without mention.

\begin{prop}
If $u_0 \equiv u_1$ and $v_0 \equiv v_1$ then:
\[\qfsval \winform{A}{u_0}{v_0} \leftrightarrow \winform{A}{u_1}{v_1}\]
\end{prop}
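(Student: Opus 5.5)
We argue by induction on the complexity of the formula $A$, simultaneously for all closed terms $u_0,u_1 : [A]$ and $v_0,v_1 : \langle A \rangle$ with $u_0 \equiv u_1$ and $v_0 \equiv v_1$. We first record that $\equiv$ between terms is a congruence in $\QFS$: reflexivity is an axiom (every term is convertible with itself), and symmetry and transitivity follow from the substitution rule applied to $u \equiv u$. Congruence for application and projection also comes from the substitution rule: from $u_0 \equiv u_1$ we get $\pi_i(u_0) \equiv \pi_i(u_1)$ by substituting into $\pi_i(x) \equiv \pi_i(u_0)$ with $x := u_0$; similarly $u_0 w \equiv u_1 w$ and $w u_0 \equiv w u_1$. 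The substitution rule is also what lets us replace equal terms inside an arbitrary proposition. Chaining these facts gives that if $u_0 \equiv u_1$ and $v_0 \equiv v_1$, then every term built from $u_0,v_0$ by applications and projections is provably equal to the corresponding term built from $u_1,v_1$.

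\textbf{Base case.} For $A$ atomic, $\winform{A}{u}{v} = A$ does not depend on $u,v$. The claim is therefore the tautology $A \leftrightarrow A$, which is an axiom.

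\textbf{$\exists x B$.} The formula is $\winform{B[\pi_1(u)/x]}{u'}{v'}$ with $u' = (\pi_2(u))((v(u))(\pi_2(u)))$ and $v' = (v(u))(\pi_2(u))$. By congruence, $u_0' \equiv u_1'$ and $v_0' \equiv v_1'$, so the induction hypothesis applies to $B[\pi_1(u_0)/x]$, which has the same complexity as $B$. It yields
\[\qfsval \winform{B[\pi_1(u_0)/x]}{u_0'}{v_0'} \leftrightarrow \winform{B[\pi_1(u_0)/x]}{u_1'}{v_1'}.\]
To obtain the stated equivalence, it remains to change the instantiating term from $\pi_1(u_0)$ to $\pi_1(u_1)$, using $\pi_1(u_0) \equiv \pi_1(u_1)$. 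For this we apply the substitution rule to the instance $\winform{B[z/x]}{u_1'}{v_1'} \leftrightarrow \winform{B[z/x]}{u_1'}{v_1'}$ of a tautology, replacing only the occurrences of $z$ that come from the instantiation. This requires observing that $\winform{B[t/x]}{u}{v}$ is obtained from $\winform{B[z/x]}{u}{v}$ by substituting $t$ for $z$. That is a routine induction, since the definition of $\winform{\cdot}{\cdot}{\cdot}$ commutes with substitution of $\iota$-terms into the formula. The inductive measure must accordingly be formula complexity, so that instances $B[t/x]$ count as smaller than $\exists x B$.

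\textbf{$\vee$ and $\neg$.} These cases are analogous: the component terms are provably equal by congruence, the induction hypothesis gives equivalence of the component formulas, and tautologies such as $(P\leftrightarrow P')\wedge(Q\leftrightarrow Q') \to (P\vee Q \leftrightarrow P'\vee Q')$ together with modus ponens close each case.

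The main obstacle is the quantifier case, specifically justifying the change of the instantiating term. It depends on how the substitution rule interacts with formulas $\winform{B}{\cdot}{\cdot}$ that contain the term in many positions, and on fixing an induction measure that is invariant under instantiation.
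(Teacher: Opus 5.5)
Your argument is correct. The paper gives no proof of this proposition; it states it and says it will be ``used freely without mention''. The reason is that the claim follows directly from the substitution rule of $\QFS$, with no induction on $A$ for the equivalence itself.

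Take fresh variables $z:[A]$ and $y:\langle A\rangle$; the defining clauses of $\winform{\cdot}{\cdot}{\cdot}$ make sense verbatim for open term arguments. The proposition $\winform{A}{u_0}{v_0} \leftrightarrow \winform{A}{z}{v_0}$ becomes, after substituting $u_0$ for $z$, an instance of the tautology $P \leftrightarrow P$. So from $u_0 \equiv u_1$ the rule yields $\winform{A}{u_0}{v_0} \leftrightarrow \winform{A}{u_1}{v_0}$. A second application, to $\winform{A}{u_0}{v_0} \leftrightarrow \winform{A}{u_1}{y}$ with $v_0 \equiv v_1$, gives the claim.

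The only ingredient is that $\winform{\cdot}{\cdot}{\cdot}$ commutes with substituting a closed term for a variable, which is a routine induction on $A$. This is exactly the lemma you already invoke in your $\exists$ case. The device you use there to pass from $\pi_1(u_0)$ to $\pi_1(u_1)$ works for all occurrences of $u_0$ and $v_0$ at once. It therefore subsumes your congruence lemmas for $\pi_i$ and application, the componentwise use of the induction hypothesis, and the separate $\vee$ and $\neg$ cases. You noted this yourself when you remarked that the substitution rule lets one replace equal terms inside an arbitrary proposition; that remark, plus the commutation lemma, is already the whole proof. Your route is sound but longer. All it adds is an explicit record that the unfolded component terms are provably equal, which the substitution rule handles implicitly.

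One point of presentation: you say you apply the substitution rule ``to'' the tautology with $z$ on both sides. The premise of the rule should instead be the tautology with $\pi_1(u_0)$ on both sides, read as $C[\pi_1(u_0)/z]$ for $C := \winform{B[\pi_1(u_0)/x]}{u_1'}{v_1'} \leftrightarrow \winform{B[z/x]}{u_1'}{v_1'}$. Then $\pi_1(u_0) \equiv \pi_1(u_1)$ gives $C[\pi_1(u_1)/z]$. Your intent is clear, but as written the step is garbled.
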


We also note the following simple observation:
\begin{prop}
\label{p:qf}
If $A$ is quantifier-free, then $\winform{A}{u}{v} = A$.
\end{prop}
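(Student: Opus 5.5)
This is proved by structural induction on the quantifier-free formula $A$. The basic connectives are $\vee$, $\neg$, $\exists$, and $A$ has no quantifiers, so there are three cases: atomic, disjunction and negation. The claim is made uniformly in the terms: for every quantifier-free $A$ and \emph{all} closed $u : [A]$, $v : \langle A \rangle$, we have $\winform{A}{u}{v} = A$ as a syntactic identity. Quantifying over all $u,v$ in the induction hypothesis is the key design choice. The defining clauses for $\vee$ and $\neg$ pass to subformulas complicated terms built from $u$ and $v$, not $u$ and $v$ themselves, so an induction with fixed terms would not go through.

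\emph{Atomic case.} The first defining clause gives $\winform{R(t_1,\hdots,t_n)}{u}{v} = R(t_1,\hdots,t_n)$ directly.

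\emph{Negation.} Suppose $A = \neg B$ with $B$ quantifier-free. By definition
\[\winform{\neg B}{u}{v} = \neg\left(\winform{B}{v(u(v))}{u(v)}\right).\]
The type check carried out just after the definition shows $v(u(v)) : [B]$ and $u(v) : \langle B \rangle$, and both are closed since $u,v$ are. The induction hypothesis, applied to $B$ with these particular terms, gives $\winform{B}{v(u(v))}{u(v)} = B$, so the whole formula is $\neg B$.

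\emph{Disjunction.} Suppose $A = B \vee C$. The definition produces $\winform{B}{s_1}{r_1} \vee \winform{C}{s_2}{r_2}$, where $s_1, r_1, s_2, r_2$ are the closed terms built from projections and applications of $u,v$. The earlier type verification shows $s_1 : [B]$, $r_1 : \langle B\rangle$, $s_2 : [C]$ and $r_2 : \langle C \rangle$. Applying the induction hypothesis to $B$ and to $C$ gives $B \vee C$.

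There is no real obstacle here. The only point needing care is the one already noted: the induction hypothesis must hold for arbitrary closed terms of the right types, and the well-typedness of the substituted terms is exactly what the paper checked when it showed the definition of $\winform{A}{u}{v}$ is sound. The $\exists$ clause never arises, since $A$ contains no quantifiers.
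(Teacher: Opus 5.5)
Your proof is correct and is the routine structural induction the paper has in mind; the paper states this proposition as a ``simple observation'' and gives no written proof. Generalizing the induction hypothesis over all closed $u : [A]$, $v : \langle A \rangle$, and checking that the terms handed to subformulas are closed and well-typed, is exactly what the $\vee$ and $\neg$ cases require.
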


\subsection{Term extraction from proofs}

To extract witnessing terms from proofs, we introduce the concept of a \emph{term transformer}, which plays the same role here as non-terminal symbols play in Herbrand schemes.
\begin{defi}
A \emph{term transformer} of signature $(U_1,\hdots, U_n,V)$ is a function $F$ mapping  $\mathcal{L}^+$-terms $u_1,\hdots,u_n$ of type $U_1,\hdots, U_n$ to an $\mathcal{L}^+$-term of type $V$, such that $F$ satisfies the following conditions:
\begin{enumerate}
\item Each free variable of $Fu_1\dotsm u_n$ is free in one of the terms $u_1,\hdots,u_n$. In particular, if $u_1,\hdots,u_n$ are all closed terms then so is $F u_1\dotsm u_n$, and if $u_1,\hdots, u_n$ are variables then  $F u_1\dotsm u_n$ is a term in which the only free variables are among $u_1,\hdots,u_n$. 
\item If $u_i \equiv v_i$ for each $i \in \{1,\hdots,n\}$ then $Fu_1\dotsm u_n \equiv Fv_1\dotsm v_n$. 
\end{enumerate}
\end{defi}
Given a proof $p$ of the end sequent $A_1,\hdots,A_n$ we shall associate with each index $i$ a term transformer $F^p_i$ of signature $(\langle A_1 \rangle,\hdots,\langle A_n\rangle, [A_i])$.
Note that, since sequents are technically speaking multisets with no internal order, the use of indices to refer to formula occurrences in a sequent is somewhat imprecise. Rather than being overly pedantic about this, we trust that the definitions will be sufficiently clear that the reader could easily work out the tedious task of formulating them in a fully precise way.

We shall define our term transformers by structural recursion on proofs, making a case distinction on the last rule of inference used in the proof $p$.  The size of a multiset $\Gamma$ of formulas will be denoted by $\vert \Gamma \vert$.

\paragraph{Case $\mathsf{lem}$:}
The proof $p$ is:
\[
\begin{prooftree}
\hypo{}
\infer1{\neg A, A}
\end{prooftree}
\]

We define corresponding term transformers by setting:

\[F^p_2 uv = u(v)\]
\[F^p_1  uv = \lambda z. v\]

To verify that these are well-defined term transformers, we focus on checking that the terms have the appropriate types. By assumption, $u : \langle \neg A \rangle = \langle A \rangle \to [A]$ and $v : \langle A \rangle$. Hence $F^p_2 uv = u(v) : [A]$ as required. In the definition of $F^p_1 uv$ as $\lambda z.v$, we have implicitly assumed that the variable $z$ is of type $\langle A \rangle \to [A]$, and so $\lambda z.v : (\langle A \rangle \to [A]) \to \langle A \rangle = [\neg A]$ as required.

\paragraph{Case $\mathsf{w}$:}

\[
\begin{prooftree}
\hypo{p_0}
\infer1{\Gamma}
\infer1{\Gamma,A}
\end{prooftree}
\]

Recall that $E_A$ is an arbitrarily chosen inhabitant of $[A]$. We set:
\[F^p_{\vert \Gamma \vert + 1}\vec{u}v = E_A\]
For $i \leq \vert \Gamma \vert$:
\[F^p_i \vec{u}v = F^{p_0}_i  \vec{u} \]
It is trivial to check that these are well-defined term transformers. 

\paragraph{Case $\mathsf{cut}$:}

\[
\begin{prooftree}
\hypo{p_1}
\infer1{\Gamma, A}
\hypo{p_2}
\infer1{\Delta, \neg A}
\infer2{\Gamma, \Delta}
\end{prooftree}
\]
 
Set $k = \vert \Gamma \vert$ and $m = \vert \Delta \vert$.  We let $h$ abbreviate $\lambda z. F^{p_1}_{k + 1} \vec{u} z$.  For $i > k$:
\[F^p_i \vec{u}\vec{v} = F^{p_2}_{i-m} \vec{v}h\]
For $i \leq k$:
\[F^p_i  \vec{u}\vec{v} = F^{p_1}_i  \vec{u}((F^{p_2}_{k + 1}  \vec{v} h)h)\]

Again, to check that these are well-defined term transformers, we check the types. In the definition of $h$ as $\lambda z. F^{p_1}_{k + 1} \vec{u} z$ it is implicit that $z : \langle A \rangle$, hence $h : \langle A \rangle \to [A] = \langle \neg A \rangle$. Hence $F^{p_2}_{k + 1}  \vec{v} h$ has the type $[\neg A] = (\langle A \rangle \to [A]) \to \langle A \rangle$, which means that  $(F^{p_2}_{k + 1}  \vec{v} h)h$ has type $\langle A \rangle$. From these observations follows immediatly from the induction hypothesis on $p_1$ and $p_2$ that each term $F^p_i \vec{u} \vec{v}$ has the appropriate type. 

\paragraph{Case $\mathsf{c}$:}
%
%
\[
\begin{prooftree}
\hypo{p_0}
\infer1{\Gamma,  A,  A}
\infer1{\Gamma, A}
\end{prooftree}
\]

Let $k = \vert \Gamma \vert$. We set:
\[F^p_{k + 1} \vec{u}v = \caseop{W}{F^{p_0}_{k + 1} \vec{u}vv}{F^{p_0}_{k+ 2} \vec{u}vv}\]
where:
\[W := \winform{A}{F^{p_0}_{\vert \Gamma \vert + 1} \vec{u}vv}{v}\]
For $i \leq \vert \Gamma \vert$ set:
\[F^p_i \vec{u}v =  F^{p_0}_i \vec{u}vv\]
It is trivial to check that these are well-defined term transformers.

\paragraph{Case $\exists$:}

\[
\begin{prooftree}
\hypo{p_0}
\infer1{\Gamma,  A[t/x]}
\infer1{\Gamma, \exists x A}
\end{prooftree}
\]

Let $h$ abbreviate $\lambda z. F^{p_0}_{\vert \Gamma \vert + 1}\vec{u}z$. 
Set:
\[F^p_{\vert \Gamma \vert + 1}  \vec{u} v = (t, h)\]
For $i \leq \vert \Gamma \vert$:
\[F^p_i \sigma \vec{u} v = F^{p_0}_i \vec{u}((v((t, h)))h)\]

We assume that $v$ is of type $\langle \exists x A \rangle = [\exists x A] \to [\neg A]$. The variable $z$ in the term $h$ is implicitly assumed to have type $\langle A \rangle$ so that $h : \langle A \rangle \to [A] = \langle \neg A \rangle$. Hence $(t,h) : \iota \times \langle \neg A \rangle = [\exists x A]$ as required. Furthermore, it follows that 
\[
\begin{aligned}
v((t,h)) : & \; [\neg A] \\
& = (\langle A \rangle \to [A]) \to \langle A \rangle \\
& \langle \neg A \rangle \to \langle A \rangle
\end{aligned}
\]
Since $h: \langle \neg A \rangle$ it follows that $(v((t,h)))h : \langle A \rangle$ as required. 

\paragraph{Case $\neg \exists$:}

Here $p$ is of the form:

\[
\begin{prooftree}
\hypo{p_0}
\infer1{\Gamma,  \neg A[\alpha/x]}
\infer1{\Gamma, \neg \exists x A}
\end{prooftree}
\]

We define a term $h : \langle \exists x A \rangle $ by:
\[h = \lambda y. (F^{p_0}_{\vert \Gamma \vert + 1}\vec{u}(\pi_2(y)))[\pi_1(y)/\alpha]\]
We set:
\[F^p_{\vert \Gamma \vert + 1}\vec{u}v = \lambda z.h\]
For $i \leq \vert \Gamma \vert$:
\[F^p_i   \vec{u}v = (F^{p_0}_i  \vec{u}(\pi_2(v(h))))[\pi_1(v(h)))/\alpha]\]

We check the types. First, the bound variable $y$ in $h$ is assumed to be of type $[\neg \exists x A]$, so that $h : [\neg \exists x A ] \to [\neg A] = \langle \exists x A \rangle$ as we claimed. Assuming the variable $z$ has type $\langle \exists x A \rangle \to [\exists x A]$, we have \[\lambda z.h : (\langle \exists x A \rangle \to [\exists x A]) \to \langle \exists x A \rangle = [\neg \exists x A]\]
as required. Furthermore, $v$ is assumed to have the type $\langle \neg \exists x A \rangle = \langle \exists x A \rangle \to [\exists x A]$. Hence $v(h) : [\exists x A]$. Since $\exists x A = \iota \times \langle \neg A \rangle$ we have $\pi_2(v(h)) : \langle \neg A \rangle$, and $\pi_1(v(h)) : \iota$. This ensures the definitions give the appropriate types.

\paragraph{Case $\vee$:}

There are two instances of this rule. In the first case the proof $p$ is of the form:

\[
\begin{prooftree}
\hypo{p_0}
\infer1{\Gamma, A}
\infer1{\Gamma,  A \vee B}
\end{prooftree}
\]

Let $k = \vert \Gamma \vert$. Let $q$ abbreviate $(\lambda z. F^{p_0}_{k+1} \vec{u} z, \lambda z. E_ B)$, where we recall that $E_B$ is an arbitrary inhabitant of $[B]$.  Set:
\[F^{p}_{k + 1} \vec{u} v  =  q\]
For $i \leq \vert \Gamma \vert$:
\[F^{p}_{i} \vec{u} v =  F^{p_0}_i \vec{u}(((\pi_1(v))(\pi_1(q)))(\pi_1(q)))\]
We check the types. We have:
\[
\begin{aligned}
q : & \; (\langle A \rangle \to [A]) \times (\langle B \rangle \to [B]) \\
& = \langle \neg A \rangle \times \langle \neg B \rangle \\
& [A \vee B]
\end{aligned}
\]
as required. It also follows that $\pi_1(q) : \langle \neg A \rangle$.

By assumption we have: 
\[
\begin{aligned}
v : & \; \langle A \vee B \rangle \\
 & = (\langle \neg A \rangle \to [\neg A]) \times (\langle \neg B \rangle \to [\neg B])
\end{aligned}
\]
so $\pi_1(v) : \langle \neg A \rangle \to [\neg B]$. Hence:
\[
\begin{aligned}
(\pi_1(v))(\pi_1(q)) : & \; [\neg A] \\
& = (\langle A \rangle \to [A]) \to \langle A\rangle \\
& = \langle \neg A \rangle \to \langle A \rangle
\end{aligned}
\]
Since $\pi_1(q) : \langle \neg A \rangle$ we get $((\pi_1(v))(\pi_1(q)))(\pi_1(q)) : \langle A \rangle$, as required.

In the second case $p$ is:

\[
\begin{prooftree}
\hypo{p_0}
\infer1{\Gamma, B}
\infer1{\Gamma,  A \vee B}
\end{prooftree}
\]

In this case let $q$ abbreviate $(\lambda z. E_A, \lambda z. F^{p_0}_{k+1} \vec{u} z)$.  Set:
\[F^{p}_{k + 1} \vec{u} v  =  q\]
For $i \leq \vert \Gamma \vert$:
\[F^{p}_{i} \vec{u} v =  F^{p_0}_i \vec{u}(((\pi_2(v))(\pi_2(q)))(\pi_2(q)))\]
Checking the types is similar to the first case.

\paragraph{Case $\neg \vee$:}

\[
\begin{prooftree}
\hypo{p_1}
\infer1{\Gamma, \neg A}
\hypo{p_2}
\infer1{\Delta, \neg B}
\infer2{\Gamma, \Delta, \neg (A \vee B)}
\end{prooftree}
\]

Let $k = \vert \Gamma \vert $ and let $m = \vert \Delta \vert$. 
Let $q_1$ abbreviate $\lambda z. F^{p_1}_{k+1} \vec{u} z$ and let $q_2$ abbreviate $\lambda z. F^{p_2}_{m+1} \vec{v} z$. Let $q$ abbreviate $\lambda z. (q_1,q_2)$. 
We set:
\[F^p_{k + m +1} \vec{u}\vec{v} w =  q \]
For $k < i \leq k+m$:
\[F^p_i \vec{u} \vec{v} w =  F^{p_2}_{i - k} \vec{v} (\pi_2(w(q(w)))\]
For $i \leq k$:
\[F^p_i \vec{u} \vec{v} w =  F^{p_1}_{i} \vec{u} (\pi_1(w(q(w)))\]

We check the types. First, we have $q_1  : \langle \neg A \rangle \to [\neg A]$ and $q_2 : \langle \neg B \rangle \to [\neg B]$. Hence:

\[
\begin{aligned}
(q_1,q_2) : & \;   (\langle \neg A \rangle \to [\neg A]) \times ( \langle \neg B \rangle \to [\neg B]) \\
& = \langle A \vee B \rangle 
\end{aligned}
\]
Assuming that $z$ is a variable of type $\langle A \vee B \rangle \to [A \vee B]$, we thus have:
\[
\begin{aligned}
\lambda z. (q_1,q_2) : & \; (\langle A \vee B \rangle \to [A \vee B ]) \to \langle A \vee B \rangle \\
& = [\neg( A \vee B)]
\end{aligned}
\]
as required. 

By assumption we have: 
\[
\begin{aligned}
w : & \;  \langle \neg (A \vee B) \rangle \\
& = \langle A \vee B \rangle \to [A \vee B] 
\end{aligned}
\]
Hence $q(w) : \langle A \vee B \rangle$, and so 
\[
\begin{aligned}
w(q(w)) : & \; [A \vee B] \\
& = \langle \neg A \rangle \times \langle \neg B \rangle
\end{aligned}
\]
so $\pi_1(w(q(w))) : \langle \neg A \rangle$ and $\pi_2(w(q(w))) : \langle \neg B \rangle$ as required. 

\paragraph{Case $\neg\neg$:}

The proof $p$ is of the form:
\[
\begin{prooftree}
\hypo{p_0}
\infer1{\Gamma, A}
\infer1{\Gamma, \neg \neg A}
\end{prooftree}
\]

Let $q$ abbreviate $\lambda y. F^{p_0}_{\vert \Gamma \vert + 1} \vec{u} y$. We set:
\[F^p_{\vert \Gamma \vert + 1}  \vec{u} v  =  \lambda z.q\]
For $i \leq \vert \Gamma \vert$:
\[F^p_i  \vec{u}v =  F^{p_0}_i  \vec{u}((v(q))q)\]

We check the types. Assuming $y$ is a variable of type $\langle A \rangle$ we have $q : \langle A \rangle \to [A] = \langle \neg A \rangle$. So assuming $z$ is of type $\langle \neg A \rangle \to [\neg A]$ we get:
\[
\begin{aligned}
\lambda z. q : & \; (\langle \neg A \rangle \to [\neg A]) \to \langle \neg A \rangle \\
& = [\neg \neg A]
\end{aligned}
\]
as required. 

By assumption we have:
\[
\begin{aligned}
v : & \; \langle \neg \neg A \rangle \\
& = \langle \neg A \rangle \to [\neg A]
\end{aligned}
\]
so $v(q) : [\neg A] = (\langle A \rangle \to [A]) \to \langle A \rangle$. Hence, as $q: \langle A \rangle \to [A]$ we get $(v(q))q : \langle A \rangle$ as required.

With the definitions in place, the following proposition can be proved by a straightforward induction:

\begin{prop}
\label{p:subst-lemma}
\[(F^p_i  \vec{u})[t/\alpha] = F^{p[t/\alpha]}_i \vec{u}\]
\end{prop}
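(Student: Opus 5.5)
We prove the statement by structural induction on the proof $p$, following the case distinction used to define the term transformers. The statement is about $p[t/\alpha]$ for an eigenvariable $\alpha$; in an arbitrary subproof $\alpha$ may occur free or may be the eigenvariable of the last rule. To keep the induction uniform, we prove the more general claim that for every $\mathcal{L}^+$-term $t : \iota$, every variable $\alpha$ of type $\iota$ not bound by a $\neg\exists$-inference of $p$, and all arguments $\vec{u}$, we have $(F^p_i \vec{u})[t/\alpha] = F^{p[t/\alpha]}_i \vec{u}[t/\alpha]$. The stated form is the special case where $\alpha$ does not occur in $\vec{u}$ (e.g.\ $\vec{u}$ are variables or closed), so $\vec{u}[t/\alpha]=\vec{u}$. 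We assume, by regularity and renaming bound variables, that $t$ contains neither the eigenvariables of $p$ nor any of the auxiliary bound variables $z,y$ used in the constructions. Then substitution commutes with abstraction, application, pairing, projection and the case operator in the ordinary capture-free way.

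In each case, we unfold the defining clause of $F^p_i$, push $[t/\alpha]$ inside the term constructors, and apply the induction hypothesis to the immediate subproofs. The simple cases are $\mathsf{lem}$, $\mathsf{w}$, $\vee$, $\neg\vee$, $\neg\neg$ and $\mathsf{cut}$. Here $p[t/\alpha]$ ends with the same rule applied to $p_j[t/\alpha]$, and the defining terms are built from the $F^{p_j}$ applied to $\vec{u}$ together with auxiliary terms such as $h$ and $q$. Pushing the substitution inward turns each $h=\lambda z.F^{p_1}_{k+1}\vec{u}z$ into $\lambda z.F^{p_1[t/\alpha]}_{k+1}\vec{u}[t/\alpha]z$ by the induction hypothesis, since $z\neq\alpha$. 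The constants $E_A$ are closed, so we take them to be chosen identically for $A$ and $A[t/\alpha]$; at type level nothing changes, because $[A]$ and $\langle A\rangle$ depend only on the logical shape of $A$. For $\exists$, the witness $t'$ becomes $t'[t/\alpha]$, which is exactly the witness used in $p[t/\alpha]$.

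The case $\mathsf{c}$ needs one more observation. The parameter $W=\winform{A}{F^{p_0}_{k+1}\vec{u}vv}{v}$ must satisfy $W[t/\alpha] = \winform{A[t/\alpha]}{(F^{p_0}_{k+1}\vec{u}vv)[t/\alpha]}{v[t/\alpha]}$. We prove this as a small auxiliary lemma, $\left(\winform{A}{u}{v}\right)[t/\alpha]=\winform{A[t/\alpha]}{u[t/\alpha]}{v[t/\alpha]}$, by induction on $A$ directly from the defining equations of $\winform{\cdot}{\cdot}{\cdot}$. In the $\exists x A$ clause this uses $A[\pi_1(u)/x][t/\alpha]=A[t/\alpha][\pi_1(u)[t/\alpha]/x]$, which holds since $x\neq\alpha$ and $x$ is not free in $t$.

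I expect the main obstacle to be the case $\neg\exists$, where the eigenvariable $\beta$ of the last rule is itself substituted away inside the definition via $[\pi_1(y)/\beta]$ and $[\pi_1(v(h))/\beta]$. If $\alpha=\beta$, then $\alpha$ is not free in the conclusion, and it is also not free in $F^p_i\vec{u}$ for $\alpha$ not occurring in $\vec{u}$: the only occurrences come from $F^{p_0}$, and there they are replaced. Hence both sides are equal trivially, because $p[t/\alpha]$ only changes the bound eigenvariable, which we handle up to renaming. If $\alpha\neq\beta$, we need the substitution composition $s[\pi_1(y)/\beta][t/\alpha]=s[t/\alpha][\pi_1(y)[t/\alpha]/\beta]$, valid because $\beta$ does not occur in $t$. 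This composition lets us apply the induction hypothesis to $p_0$ and re-form $h[t/\alpha]$ as the $h$ of $p[t/\alpha]$. The same computation handles the second clause, where the argument is $\pi_2(v(h))$.
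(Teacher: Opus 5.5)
Your proposal is correct and is the same structural induction on $p$ that the paper invokes as ``straightforward'', with the case-by-case details supplied. The side conditions you make explicit are precisely the conventions the paper leaves implicit and needs for the stated syntactic equality to hold: that $\alpha$ is not free in $\vec{u}$ and is not an eigenvariable of $p$, that $E_A$ depends only on the type $[A]$ (so $E_A = E_{A[t/\alpha]}$), and that substitution commutes with the formulas $\winform{A}{u}{v}$ in the contraction case.
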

Note that, using Proposition \ref{p:subst-lemma}, we can write the definition of terms transformers for a proof $p$ of the form
\[
\begin{prooftree}
\hypo{p_0}
\infer1{\Gamma,  \neg A[\alpha/x]}
\infer1{\Gamma, \neg \exists x A}
\end{prooftree}
\]
equivalently as follows. We let
\[h = \lambda y. F^{p_0[\pi_1(y)/\alpha]}_{\vert \Gamma \vert + 1}\vec{u}(\pi_2(y))\]
and set:
\[F^p_{\vert \Gamma \vert + 1}\vec{u}v = \lambda z.h\]
For $i \leq \vert \Gamma \vert$:
\[F^p_i   \vec{u}v = F^{p_0[ \pi_1(v(h))/\alpha]}_i  \vec{u}(\pi_2(v(h)))\]

We are now ready to state the main result:

\begin{theo}
\label{t:soundness}
Let $p$ be a proof of the sequent $A_1,\hdots,A_n$. Then, for all closed terms \[u_1 : \langle A_1\rangle, \hdots, u_n : \langle A_n \rangle\] we have:
\[\qfsval \winform{A_1}{F^p_1 \vec{u}}{u_1} \vee \hdots \vee \winform{A_n}{F^p_n \vec{u}}{u_n}\]
\end{theo}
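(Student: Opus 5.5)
We prove Theorem \ref{t:soundness} by structural induction on $p$, following the case distinction used to define the term transformers. The induction hypothesis is the full statement for every immediate subproof, quantified over all closed arguments. This quantification matters because in every case we instantiate the subproof's counter-evidence slots with compound closed terms such as $h$, $(F^{p_2}_{k+1}\vec v h)h$ or $\pi_2(v(h))$. In each case we unfold $\winform{\cdot}{\cdot}{\cdot}$ for the principal formula and $\beta$/projection-reduce inside $\QFS$; convertible terms are provably equal, and we use the proposition that $\winform{A}{u}{v}$ respects $\equiv$ freely. The aim is to show that the disjunct for the principal formula of $p$ is $\QFS$-equivalent to the disjunct for the active formula(s) of the premiss. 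The side formulas then match automatically, because the transformers were defined to feed exactly those counter-evidence terms to the premisses.

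\textbf{Axiom and unary rules.} For $\mathsf{lem}$ with $F^p_2uv=u(v)$ and $F^p_1uv=\lambda z.v$, the left disjunct unfolds to $\neg\winform{A}{u((\lambda z.v)u)}{(\lambda z.v)u}\equiv\neg\winform{A}{uv}{v}$. The right disjunct is $\winform{A}{uv}{v}$, so the disjunction is a tautology. For $\mathsf{w}$ we simply weaken the induction hypothesis.

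For $\exists$, the principal disjunct is $\winform{\exists xA}{(t,h)}{v}$. It unfolds to $\winform{A[t/x]}{h(v(t,h)h)}{v(t,h)h}$, and since $h(v(t,h)h)\equiv F^{p_0}_{k+1}\vec u\,(v(t,h)h)$ this is precisely the induction hypothesis instantiated at $(v((t,h)))h$. The cases $\vee$ and $\neg\neg$ are handled the same way.

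For $\neg\exists$ we use the reformulation via Proposition \ref{p:subst-lemma}. Setting $s=v(h)$, the principal disjunct reduces to $\neg\winform{\exists xA}{s}{h}$. This unfolds to $\neg\winform{A[\pi_1 s/x]}{\pi_2 s(h(s)\pi_2 s)}{h(s)\pi_2 s}$, where $h(s)\equiv F^{p_0[\pi_1 s/\alpha]}_{k+1}\vec u(\pi_2 s)$. On the other side, the induction hypothesis applied to $p_0[\pi_1 s/\alpha]$, a proof in the extended system with $\mathcal{L}^+$-terms, at arguments $\vec u,\pi_2 s$ gives exactly the matching negated disjunct. The other disjuncts match by the definition of $F^p_i$.

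\textbf{Binary rules.} For $\neg\vee$, put $w'=w(q(w))$. The principal disjunct unfolds to $\neg(\winform{A}{\cdot}{\cdot}\vee\winform{B}{\cdot}{\cdot})$. Its two components reduce to the disjuncts $\neg\winform{A}{\ldots}{\ldots}$ and $\neg\winform{B}{\ldots}{\ldots}$ obtained from the induction hypotheses for $p_1$ at $\pi_1 w'$ and for $p_2$ at $\pi_2 w'$, after unfolding the negation clause. Propositional reasoning then combines the two induction hypotheses: from $\Gamma^*\vee\neg A^*$ and $\Delta^*\vee\neg B^*$ we get $\Gamma^*\vee\Delta^*\vee\neg(A^*\vee B^*)$.

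For $\mathsf{cut}$, we apply the induction hypothesis for $p_1$ at $\vec u,c$ with $c=(F^{p_2}_{m+1}\vec vh)h$, which yields $\Gamma^*\vee\winform{A}{hc}{c}$. The induction hypothesis for $p_2$ at $\vec v,h$ yields $\Delta^*\vee\neg\winform{A}{h(F\vec vh\,h)}{F\vec vh\,h}=\Delta^*\vee\neg\winform{A}{hc}{c}$. Propositional resolution then finishes the case.

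\textbf{Main obstacle: contraction.} The main obstacle is $\mathsf{c}$. There the induction hypothesis gives $\Gamma^*\vee\winform{A}{F_{k+1}\vec uvv}{v}\vee\winform{A}{F_{k+2}\vec uvv}{v}$. We must show this yields $\winform{A}{\caseop{W}{F_{k+1}}{F_{k+2}}}{v}$, and for that we reason by cases on $W$ inside $\QFS$. If $W$ holds, the case term $\equiv F_{k+1}\vec uvv$ by the case axiom and substitution, so $W$ itself is the goal disjunct. If $\neg W$ holds, the case term $\equiv F_{k+2}\vec uvv$, and the third disjunct of the induction hypothesis is exactly what is needed. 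This step hinges on replacing a term inside $\winform{A}{\cdot}{v}$ using a provable but conditional equation. Under a hypothesis $W$ we need substitutivity of $\equiv$ into the compound formula $\winform{A}{\cdot}{v}$, which follows from the $\QFS$ substitution rule together with the deduction theorem for tautological reasoning. Checking carefully that this conditional substitution is available in $\QFS$ is the point I expect to require the most care.
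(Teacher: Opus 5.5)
Your proposal follows the paper's proof essentially case by case. It uses the same instantiations of the induction hypothesis: $(F^{p_2}_{m+1}\vec v h)h$ and $h$ for cut, $v,v$ for contraction, $(v((t,h)))h$ for $\exists$, and $\pi_i(w(q(w)))$ for $\neg\vee$. It also uses Proposition \ref{p:subst-lemma} for $\neg\exists$ in the same way, and makes the same case split on $W$ for contraction. Two points need tightening.

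\textbf{The induction principle.} You announce a structural induction whose hypothesis covers the immediate subproofs. In the $\neg\exists$ case, however, you apply it to $p_0[\pi_1 s/\alpha]$, which is not a subproof of $p$. The paper avoids this by inducting on the size of proofs in the extended system that allows $\mathcal{L}^+$-terms of type $\iota$. Substituting a term for an eigenvariable does not change size, so the hypothesis covers the substituted proof. You should set up your induction the same way.

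\textbf{Conditional substitution in contraction.} Your justification here does not work as stated. A deduction theorem for $\QFS$ does not follow from the one for tautological reasoning. The substitution rule requires its equational premise $u\equiv v$ to be provable outright. Relativizing that rule to a hypothesis $W$ is exactly the conditional Leibniz principle you are trying to justify, so the argument is circular.

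The paper itself simply asserts its steps (3) and (4) without justification. In effect it treats $\QFS$ as supporting equality reasoning under hypotheses. The clean fix is to make this explicit, for instance by adding the scheme $u\equiv v\to(B[u/x]\to B[v/x])$ to $\QFS$. This scheme is valid in the semantics, so Proposition \ref{p:qfs-soundness} is unaffected. Together with the case axioms it yields, by propositional reasoning, $W\qfsval \winform{A}{\caseop{W}{s}{t}}{v}\leftrightarrow\winform{A}{s}{v}$, and the corresponding statement under $\neg W$ with $t$ in place of $s$. That is exactly what your case split needs. With these two adjustments your argument coincides with the paper's proof.
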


The proof of Theorem \ref{t:soundness} will occupy us in Section \ref{s:main-proof}.

\subsection{Sequent calculus, Herbrand's theorem and concurrency}
\label{s:concurrency}

The sequent calculus has a close relationship with Herbrand's theorem; indeed, the most well-known way to prove Herbrand's theorem is probably via cut elimination for sequent calculus, extracting a Herbrand disjunction from a ``mid-sequent'' in a cut-free proof. This connection between sequent calculus and Herbrand's theorem was explored by Alcolei et al. \cite{alcolei2018true}, in which game semantics was used to describe the computational content of sequent calculus proofs.  The game takes place between two players, ``Prover'' and ``Refuter'', and allows for several copies of a game to be played in parallell. The winning condition favors Prover, in that she essentially wins the game if she wins in one of the parallel copies of the game. The main result in \cite{alcolei2018true} shows that one can extract a winning strategy for Prover for a sequent calculus proof, and that if the end sequent is $\Sigma_1$ then that strategy naturally gives rise to a Herbrand disjunction. 

To illustrate this game-theoretic perspective we can consider the following classic example, a sequent calculus proof of the formula known as the ``drinker paradox'':

\[
\begin{prooftree}
\hypo{}
\infer1{ P\alpha,  \neg P\alpha}
\infer1[$\vee$]{ P\alpha,  P\alpha \to  P\beta}
\infer1[$\neg \neg + \neg \exists$]{ P\alpha,  \forall y (P\alpha \to  Py)}
\infer1[$\exists$]{ P\alpha, \exists x \forall y (Px \to  Py)}
\infer1[$\vee$]{ Pc \to  P\alpha, \exists x \forall y (Px \to  Py)}
\infer1[$\neg \neg + \neg \exists$]{ \forall y (Pc \to  Py), \exists x \forall y (Px \to  Py)}
\infer1[$\exists$]{\exists x \forall y (Px \to  Py), \exists x \forall y (Px \to  Py)}
\infer1[$\mathsf{c}$]{\exists x \forall y (Px \to  Py)}
\end{prooftree}
\]
Here we take $\forall x : = \neg \exists x \neg$ and $A \to B := \neg A \vee B$ as usual. In the game-theoretic interpretation of this proof, Prover's main task is to come up with a witness for the existential quantifier. Howver, there is no one choice that will work for a single play of the game. So Prover immediately decides to split the game into two copies to be played in parallel, and aims to ensure a win in at least one of the copies. The way that Prover's strategy plays out is directly reflected in the two ``threads'' of the proof; in the left thread, corresponding to one copy of the game, Prover simply chooses a witness $c$ for the existential quantifier. Now, Refuter has to respond with some counter-example to the universal quantifier. The eigen-variable $\alpha$ can be thought of as a generic proposed counter-example, so that Prover's strategy does not assume anything about how Refuter is going to play. Prover has set a trap here: playing the first move $c$ forces a response from  Refuter, and this will then be used in the other copy of the game represented by the right thread in the proof. Here, Prover takes the counter-example $\alpha$ from Refuter's response and now plays that as a witness instead. In this sense there is communication between the parallel copies of the game: Prover can take a move of Refuter in one copy of the game and use it to find the right moves in a different copy of the game. All of this is plainly visible in the proof, and thinking of a sequent calculus proof as a concurrent process seems very natural indeed.

The aim of the  present work is to see how functional interpretations fit into this picture. The important thing  is not that we \emph{can} extract realizing terms from a sequent calculus proof, but rather \emph{how} this is done. Given a proof $p$ of  a sequent $A_1,\hdots,A_n$, we can think of closed terms  \[u_1 : \langle A_1\rangle, \hdots, u_n : \langle A_n \rangle\] as representing strategies for Refuter in $n$ different games played in parallel. The term extraction finds strategies for Prover to ensure that she wins in at least one of these parallel games, expressing by the validity of the disjunction:
\[ \winform{A_1}{F^p_1 \vec{u}}{u_1} \vee \hdots \vee \winform{A_n}{F^p_n \vec{u}}{u_n}\]
Thus we can think if these extracted terms as processes running concurrently. The connection between threads in a proof and individual processes is also preserved:
each individual term $F^p_i \vec{u}$ is constructed directly from the threads of the proof leading to the formula $A_i$.

Furthermore, as in the game-semantics framework, there is communication between these individual processes. To isolate the relevant part of the above proof of the drinker paradox, consider a proof $p$ of the form:
\[
\begin{prooftree}
\hypo{\vdots}
\infer1{ d.\; \neg \exists y  A(\alpha,y),   \neg A(c,\alpha)}
\infer1{c.\;\exists x \neg \exists y   A(x,y),   \neg  A(c,\alpha)}
\infer1{b.\;\exists x \neg \exists y   A(x,y),  \neg \exists y  A(c,y)}
\infer1{a.\;\exists x \neg \exists y A(x,y), \exists x \neg \exists y  A(x,y)}
\end{prooftree}
\]

Given terms $u, v$ representing counter-evidence for each copy of the formula   $\exists x \neg \exists y  A(x,y)$, we extract witnessing terms $F^a_1 u v$ and $F^a_2 u v$. The labels $a,b,c,d$ help to refer to subproofs of $p$, so that $p$ itself has the label $a$. The interesting thing is to look at  how the term $F^a_1 u v$ is extracted; the fact that this is a non-principal formula at the end-sequent of the proof can intuitively be thought of as Prover waiting for the second copy of the game to play out for a bit before making her move.  According to the definition of term transformers from proofs, we get:
\[
F^a_1 u v  = F^b_1 u ((v(c, h))h) 
\]
 where $h$ abbreviates $\lambda z. F^b_2 u z$.   So here, Prover simply observes that given the strategy played in the parallel copy of the game, where Prover plays $c$ as a witness, Refuter responds with $(v(c, h))h$ which is a term of type $\langle \neg \exists y A(c,y) \rangle$. We abbreviate this term by $w_c$, with the subscript to emphasize the dependence on the term $c$ played by Prover.  In the next step we compute:
\[
F^b_1 u w_c  = (F^c_1 u (\pi_2(w_c(g))))[\pi_1(w_c(g))/\alpha]
\]
where $g$ abbreviates $\lambda z. (F^c_2 u (\pi_2(z))[\pi_1(z)/\alpha]$. In the next step, Prover makes her move, and gives $\pi_1(w_c(g))$ as the witness for the existential quantifer due to the substitution of $\pi_1(w_c(g))$ for $\alpha$. So this move depends on a response of Refuter to a move made by Prover in another other copy of the game.

\section{Proof of the main theorem}
\label{s:main-proof}

We shall prove Theorem \ref{t:soundness} by induction on the size of a proof $p$, verifying that soundness is preserved by each recursive clause of the definition. We exclude from the proof the trivial case of a proof ending with the weakening rule. 

\subsection{Interpretation of the axiom}

Shape of $p$:
\[
\begin{prooftree}
\hypo{}
\infer1{\neg A, A}
\end{prooftree}
\]

We recall that we defined: 
\[F^p_2 uv = u(v)\]
\[F^p_1 uv = \lambda z. v\]
We have:
\[
\begin{aligned}
\winform{\neg A}{F^p_1 \sigma u v}{u} & \equiv \winform{\neg A}{\lambda z. v}{u} \\
& = \neg \left( \winform{A}{u((\lambda z. v)u)}{(\lambda z. v)u} \right) \\
& \equiv \neg \left( \winform{A}{u(v)}{v} \right) \\
\end{aligned}
\]
and:
\[
\begin{aligned}
\winform{A}{F^p_2 \sigma u v}{v} & \equiv \winform{A}{u(v)}{v}
\end{aligned}
\]
As a propositional tautology we have:
\[\qfsval  \neg \left( \winform{A}{u(v)}{v} \right) \vee \winform{A}{u(v)}{v}\]
hence:
\[\qfsval  \winform{\neg A}{F^p_1 \sigma u v}{u} \vee \winform{A}{F^p_2 \sigma u v }{v}\]
as required.

\subsection{Interpretation of cut}

Shape of $p$:

\[
\begin{prooftree}
\hypo{p_1}
\infer1{\Gamma, A}
\hypo{p_2}
\infer1{\Delta, \neg A}
\infer2{\Gamma, \Delta}
\end{prooftree}
\]

We recall the definition of the corresponding term transformers. 
Set $k = \vert \Gamma \vert$ and $m = \vert \Delta \vert$.  We let $h$ abbreviate $\lambda z. F^{p_1}_{k + 1}  \vec{u} z$ so that $h : \langle A \rangle \to [A]$.  For $i > k$:
\[F^p_i  \vec{u}\vec{v} =  F^{p_2}_{i-m} \vec{v}h\]
For $i \leq k$:
\[F^p_i  \vec{u}\vec{v} =  F^{p_1}_i  \vec{u}((F^{p_2}_{k + 1} \vec{v} h)h)\]
The induction hypothesis on $p_1$ gives:
\[(1)\;{\qfsval}\bigvee_{1 \leq i \leq k} \winform{U_i}{F^{p_1}_i  \vec{u} w }{u_i} \vee \winform{A}{F^{p_1}_{k+1}  \vec{u} w }{w} \]
for all $w$. Substituting $(F^{p_2}_{m + 1} \vec{v} h)h $ for $w$ in $(1)$ gives:
\[(2)\;{\qfsval}\bigvee_{1 \leq i \leq k} \winform{U_i}{F^{p_1}_i  \vec{u} ((F^{p_2}_{m + 1} \vec{v} h)h) }{u_i} \vee \winform{A}{F^{p_1}_{k+1}  \vec{u} ((F^{p_2}_{m + 1} \vec{v} h)h) }{(F^{p_2}_{m + 1} \vec{v} h)h} \]
The induction hypothesis on $p_2$ gives:
\[(3)\;{\qfsval}\bigvee_{1 \leq j \leq m} \winform{V_j}{F^{p_2}_j  \vec{v} w }{v_j} \vee \winform{\neg A}{F^{p_2}_{m+1}  \vec{v} w }{w} \]
For all $w$, in particular:
\[(4)\;{\qfsval}\bigvee_{1 \leq j \leq m} \winform{V_j}{F^{p_2}_j  \vec{v} h }{v_j} \vee \winform{\neg A}{F^{p_2}_{m+1}  \vec{v} h }{h} \]
But we have:
\[
\begin{aligned}
\winform{\neg A}{F^{p_2}_{m+1}  \vec{v} h }{h} & = \neg \left( \winform{A}{h((F^{p_2}_{m+1}  \vec{v} h)h)}{(F^{p_2}_{m+1}  \vec{v} h)h } \right) \\
& = \neg \left( \winform{A}{(\lambda z. F^{p_1}_{k+1}  {u} z)((F^{p_2}_{m+1}  \vec{v} h)h)}{(F^{p_2}_{m+1}  \vec{v} h)h } \right) \\
& \equiv \neg \left( \winform{A}{ F^{p_1}_{k+1}  {u}((F^{p_2}_{m+1}  \vec{v} h)h)}{(F^{p_2}_{m+1}  \vec{v} h)h } \right) \\
\end{aligned}
\]
So from $(4)$ we get:
\[(5)\;{\qfsval}\bigvee_{1 \leq j \leq m} \winform{V_j}{F^{p_2}_j  \vec{v} h }{v_j} \vee \neg \left( \winform{A}{ F^{p_1}_{k+1}  {u}((F^{p_2}_{m+1}  \vec{v} h)h)}{(F^{p_2}_{m+1}  \vec{v} h)h } \right) \]
A cut on $(5)$ and $(2)$ gives:
\[(6)\;{\qfsval}\bigvee_{1 \leq i \leq k} \winform{U_i}{F^{p_1}_i  \vec{u} ((F^{p_2}_{m + 1} \vec{v} h)h) }{u_i} \vee \bigvee_{1 \leq j \leq m} \winform{V_j}{F^{p_2}_j  \vec{v} h }{v_j}  \]
But this is:
\[{\qfsval}\bigvee_{1 \leq i \leq k} \winform{U_i}{F^{p}_i  \vec{u} \vec{v}}{u_i} \vee \bigvee_{k + 1 \leq j \leq k + m} \winform{V_j}{F^{p}_{j}  \vec{u}\vec{v}}{v_j}  \]
as required.

\subsection{Contraction}
Shape of $p$:
\[
\begin{prooftree}
\hypo{p_0}
\infer1{\Gamma,  A,  A}
\infer1{\Gamma, A}
\end{prooftree}
\]

We recall the definition of the corresponding term transformers:
\[F^p_{\vert \Gamma \vert + 1} \vec{u}v = \caseop{W}{F^{p_0}_{\vert \Gamma \vert + 1} \vec{u}vv}{F^{p_0}_{\vert \Gamma \vert + 2} \vec{u}vv}\]
where:
\[W = \winform{A}{F^{p_0}_{\vert \Gamma \vert + 1} \vec{u}vv}{v}\]
For $i \leq \vert \Gamma \vert:$
\[F^p_i \vec{u}v =  F^{p_0}_i \vec{u}vv\]
The induction hypothesis on $p_0$ gives:
\[(1)\; {\qfsval}\bigvee_{1 \leq i \leq \vert \Gamma \vert} \winform{U_i}{F^{p_0}_i \vec{u}w_1 w_2}{u_i} \vee \winform{A}{F^{p_0}_{\vert \Gamma \vert + 1} \vec{u} w_1 w_2}{w_1}  \vee  \winform{A}{F^{p_0}_{\vert \Gamma \vert + 2} \vec{u} w_1 w_2}{w_2} \]
for all $w_1,w_2$. Setting $w_1 = w_2 = v$ in $(1)$ we get:
\[(2)\;{\qfsval}\bigvee_{1 \leq i \leq \vert \Gamma \vert} \winform{U_i}{F^{p_0}_i \vec{u}v v}{u_i} \vee \winform{A}{F^{p_0}_{\vert \Gamma \vert + 1} \vec{u} vv}{v}  \vee  \winform{A}{F^{p_0}_{\vert \Gamma \vert + 2} \vec{u} v v}{v} \]
We have:
\[(3)\; W {\qfsval}\winform{A}{F^{p_0}_{\vert \Gamma \vert + 1} \vec{u} vv}{v} \leftrightarrow \winform{A}{\caseop{W}{F^{p_0}_{\vert \Gamma \vert + 1} \vec{u}vv}{F^{p_0}_{\vert \Gamma \vert + 2} \vec{u}vv}}{v}\] 
and:
\[(4)\; \neg W {\qfsval}\winform{A}{F^{p_0}_{\vert \Gamma \vert + 2} \vec{u} vv}{v} \leftrightarrow \winform{A}{\caseop{W}{F^{p_0}_{\vert \Gamma \vert + 1} \vec{u}vv}{F^{p_0}_{\vert \Gamma \vert + 2} \vec{u}vv}}{v}\] 
From $(2)$, $(3)$ and $(4)$ we get:
\[(5)\;{\qfsval}\bigvee_{1 \leq i \leq \vert \Gamma \vert} \winform{U_i}{F^{p_0}_i \vec{u}v v}{u_i} \vee  \winform{A}{\caseop{W}{F^{p_0}_{\vert \Gamma \vert + 1} \vec{u}vv}{F^{p_0}_{\vert \Gamma \vert + 2} \vec{u}vv}}{v} \]
But this is:
\[{\qfsval}\bigvee_{1 \leq i \leq \vert \Gamma \vert} \winform{U_i}{F^p_i \vec{u}v}{u_i} \vee \winform{A}{F^p_i \vec{u} v}{v}\]
as required.

\subsection{$\exists$-rule}

Shape of $p$:
\[
\begin{prooftree}
\hypo{p_0}
\infer1{\Gamma,  A[t/x]}
\infer1{\Gamma, \exists x A}
\end{prooftree}
\]

We recall the definition of the corresponding term transformers. 
Let $h$ abbreviate $\lambda z. F^{p_0}_{\vert \Gamma \vert + 1} \vec{u}z v$. 
We have:
\[F^p_{\vert \Gamma \vert + 1} \vec{u} v =  (t, h)\]
For $i \leq \vert \Gamma \vert$:
\[F^p_i \vec{u} v =  F^{p_0}_i \vec{u}((v((t, h)))h)\]
We get: 
\[
\begin{aligned}
 \winform{\exists x A}{F^p_{\vert \Gamma \vert + 1} \vec{u} v }{v}  
& \equiv \winform{\exists x A}{(t, h)}{v} \\
& \equiv \winform{A[\pi_1((t, h))/x]}{(\pi_2((t, h)))((v((t, h)))(\pi_2((t, h))))}{(v((t, h)))(\pi_2((t, h)))} \\
& \equiv \winform{A[t/x]}{h((v((t, h)))h)}{(v((t, h)))h} \\
& \equiv \winform{A[t/x]}{ F^{p_0}_{\vert \Gamma \vert + 1} \vec{u}((v((t,h)))h))}{(v((t, h)))h} \\
\end{aligned}
\]

Hence we have:
\[(1)\; \winform{\exists x A}{F^p_{\vert \Gamma \vert + 1} \vec{u} v }{v}  \equiv  \winform{A[t/x]}{ F^{p_0}_{\vert \Gamma \vert + 1} \vec{u}((v((t,h)))h))}{(v((t, h)))h} \]

The induction hypothesis on $p_0$ gives:
\[(2)\;{\qfsval}\bigvee_{1 \leq i \leq \vert \Gamma \vert} \winform{U_i}{F^{p_0}_i \vec{u} w}{u_i} \vee \winform{A[t/x]}{F^{p_0}_{\vert \Gamma \vert + 1}\vec{u}w}{w}\]
for all $w$. Setting $w = (v((t, h)))h$ in $(1)$ gives:
\[(3)\;{\qfsval}\bigvee_{1 \leq i \leq \vert \Gamma \vert} \winform{U_i}{F^{p_0}_i \vec{u} ((v((t, h)))h) v}{u_i} \vee \winform{A[t/x]}{F^{p_0}_{\vert \Gamma \vert + 1}\vec{u}((v((t, h)))h)}{(v((t, h)))h}\]
But by $(1)$ this is:
\[{\qfsval}\bigvee_{1 \leq i \leq \vert \Gamma \vert} \winform{U_i}{F^{p}_i \vec{u} v}{u_i} \vee \winform{\exists x A}{F^{p}_{\vert \Gamma \vert + 1}\vec{u}v}{v} \]
as required.

\subsection{$\neg \exists$-rule}

Shape of $p$:
\[
\begin{prooftree}
\hypo{p_0}
\infer1{\Gamma,  \neg A[\alpha/x]}
\infer1{\Gamma, \neg \exists x A}
\end{prooftree}
\]

We recall the definition of the corresponding term transformers. We set:
\[h = \lambda y. F^{p_0[\pi_1(y)/\alpha]}_{\vert \Gamma \vert + 1} \vec{u}(\pi_2(y))\]
Appealing to Proposition \ref{p:subst-lemma} the associated term transformers are:
\[F^p_{\vert \Gamma \vert + 1} \vec{u}v = \lambda z.h\]
For $i \leq \vert \Gamma \vert$:
\[F^p_i  \vec{u}v = F^{p_0 [\pi_1(v(h))/\alpha]}_i \vec{u}(\pi_2(v(h)))\]
We have:
\[
\begin{aligned}
\winform{\neg \exists x A}{F^p_{\vert \Gamma \vert + 1} \vec{u}v}{v} & = \winform{\neg \exists x A}{\lambda z. h}{v} \\
& = \neg \left( \winform{\exists x A}{v((\lambda z.h)v)}{(\lambda z.h)v} \right) \\
& \equiv \neg \left( \winform{\exists x A}{v(h)}{h} \right) \\
& = \neg \left( \winform{ A [\pi_1(v(h))/x]}{(\pi_2(v(h)))((h(v(h)))(\pi_2(v(h))) )}{(h(v(h))) (\pi_2(v(h)))} \right) \\
& = \winform{ \neg A [\pi_1(v(h))/\alpha]}{h(v(h))}{\pi_2(v(h))} \\
& = \winform{ \neg A [\pi_1(v(h))/x]}{(\lambda y. F^{p_0[\pi_1(y)/\alpha]}_{\vert \Gamma \vert + 1} \vec{u}(\pi_2(y)))(v(h))}{\pi_2(v(h))} \\
& \equiv \winform{ \neg A [\pi_1(v(h))/\alpha]}{ F^{p_0[\pi_1(v(h))/x]}_{\vert \Gamma \vert + 1} \vec{u}(\pi_2(v(h)))}{\pi_2(v(h))} \\
\end{aligned}
\]

The induction hypothesis on $p_0[\pi_1(v(h))/\alpha]$ gives:
\[(1)\; \bigvee_{1 \leq i \leq \vert \Gamma \vert} \winform{U_i}{F^{p_0 [\pi_1(v(h))/\alpha]}_i \vec{u}w}{u_i} \vee \winform{ \neg A [\pi_1(v(h))/\alpha]}{ F^{p_0[\pi_1(v(h))/\alpha]}_{\vert \Gamma \vert + 1} \vec{u}w}{w} \]
for all $w$. Setting $w = \pi_2(v(h))$ in $(1)$ gives:
\[(2)\; \bigvee_{1 \leq i \leq \vert \Gamma \vert} \winform{U_i}{F^{p_0 [\pi_1(v(h))/\alpha]}_i \vec{u}(\pi_2(v(h)))}{u_i} \vee \winform{ \neg A [\pi_1(v(h))/\alpha]}{ F^{p_0[\pi_1(v(h))/\alpha]}_{\vert \Gamma \vert + 1} \vec{u}(\pi_2(v(h)))}{\pi_2(v(h))} \]
But $(2)$ is:
\[{\qfsval}\bigvee_{1 \leq i \leq \vert \Gamma \vert} \winform{U_i}{F^p_i \vec{u}v}{u_i} \vee \winform{\neg \exists x A}{F^p_{\vert \Gamma \vert + 1} \vec{u}v}{v} \]
as required.

\subsection{$\vee$-rule}

We consider only one instance of the $\vee$-rule since the two cases are similar. 

\[
\begin{prooftree}
\hypo{p_0}
\infer1{\Gamma, A}
\infer1{\Gamma,  A \vee B}
\end{prooftree}
\]

We recall the definition of the corresponding term transformers. 
Let $k = \vert \Gamma \vert$. Let $q$ abbreviate $(\lambda z. F^{p_0}_{k+1} \vec{u} z, \lambda z. E_ B)$.  Set:
\[F^{p}_{k + 1} \vec{u} v = q\]

For $i \leq \vert \Gamma \vert$:
\[F^{p}_{i} \vec{u} v =  F^{p_0}_i \vec{u}(((\pi_1(v))(\pi_1(q)))(\pi_1(q)))\]
We have:
\[
\begin{aligned}
\winform{A \vee B}{F^{p}_{k + 1} \vec{u} v}{v} & = \winform{A \vee B}{q}{v}  \\
& = \winform{A}{\pi_1(q)(((\pi_1(v))(\pi_1(q)))(\pi_1(q)))}{((\pi_1(v))(\pi_1(q)))(\pi_1(q))} \vee  \winform{B}{\pi_2(q)(((\pi_2(v))(\pi_2(q)))(\pi_2(q)))}{((\pi_2(v))(\pi_2(q)))(\pi_2(q))} \\
& \equiv \winform{A}{(\lambda z. F^{p_0}_{k+1} \vec{u} z)(((\pi_1(v))(\pi_1(q)))(\pi_1(q)))}{((\pi_1(v))(\pi_1(q)))(\pi_1(q))} \vee  \winform{B}{(\lambda z. E_ B)(((\pi_2(v))(\pi_2(q)))(\pi_2(q)))}{((\pi_2(v))(\pi_2(q)))(\pi_2(q))} \\
& \equiv \winform{A}{F^{p_0}_{k+1} \vec{u} (((\pi_1(v))(\pi_1(q)))(\pi_1(q)))}{((\pi_1(v))(\pi_1(q)))(\pi_1(q))} \vee  \winform{B}{E_ B}{((\pi_2(v))(\pi_2(q)))(\pi_2(q))} \\
\end{aligned}
\]

By the induction hypothesis on $p_0$ we have:
\[(1)\; {\qfsval}\bigvee_{i \leq k} \winform{U_i}{F^{p_0}_i \vec{u}w}{u_i} \vee \winform{A}{F^{p_0}_{k+1} \vec{u}w}{w}\]
for all $w$. Setting $w = ((\pi_1(v))(\pi_1(q)))(\pi_1(q))$  gives:
\[(2)\; {\qfsval}\bigvee_{i \leq k} \winform{U_i}{F^{p_0}_i \vec{u}(((\pi_1(v))(\pi_1(q)))(\pi_1(q)))}{u_i} \vee \winform{A}{F^{p_0}_{k+1} \vec{u}(((\pi_1(v))(\pi_1(q)))(\pi_1(q)))}{((\pi_1(v))(\pi_1(q)))(\pi_1(q))}\]
But we have:
\[(3)\quad \winform{(A \vee B)}{F^p_{k+1} \vec{u}v}{v} = \winform{A}{F^{p_0}_{k+1} \vec{u}(((\pi_1(v))(\pi_1(q)))(\pi_1(q)))}{((\pi_1(v))(\pi_1(q)))(\pi_1(q))}\vee  \winform{B}{E_ B}{((\pi_2(v))(\pi_2(q)))(\pi_2(q))}\]

So from $(2)$ and propositional logic we get:
\[(4)\; {\qfsval}\bigvee_{i \leq k} \winform{U_i}{F^{p_0}_i \vec{u}(((\pi_1(v))(\pi_1(q)))(\pi_1(q)))}{u_i} \vee \winform{(A \vee B)}{F^p_{k+1} \vec{u}v}{v}\]
which is:
\[{\qfsval}\bigvee_{i \leq k}\winform{U_i}{F^p_i \vec{u}v}{u_i} \vee \winform{(A \vee B)}{F^p_{k+1} \vec{u}v}{v}\]
as required.

\subsection{$\neg \vee$-rule}
Shape of $p$:
\[
\begin{prooftree}
\hypo{p_1}
\infer1{\Gamma, \neg A}
\hypo{p_2}
\infer1{\Delta, \neg B}
\infer2{\Gamma, \Delta, \neg (A \vee B)}
\end{prooftree}
\]
We recall the definition of the corresponding term transformers. 
Let $k = \vert \Gamma \vert $ and let $m = \vert \Delta \vert$. 
Let $q_1$ abbreviate $\lambda z. F^{p_1}_{k+1} \vec{u} z$ and let $q_2$ abbreviate $\lambda z. F^{p_2}_{m+1} \vec{v} z$. 
We set:
\[F^p_{k + m +1} \vec{u}\vec{v} w = q \]
For $k < i \leq k+m$:
\[F^p_i \vec{u} \vec{v} w =  F^{p_2}_{i - k} \vec{v} (\pi_2(w(q(w)))\]
For $i \leq k$:
\[F^p_i \vec{u} \vec{v} w =  F^{p_1}_{i} \vec{u} (\pi_1(w(q(w)))\]

Note that, for all $u,v$, we have:
\[
\begin{aligned}
& \winform{\neg (A \vee B)}{u}{v} \\
& =  \neg \left(\winform{A \vee B }{v(u(v))}{u(v)} \right) \\
& =  \neg (\winform{A}{\pi_1(v(u(v)))(((\pi_1(u(v)))(\pi_1(v(u(v)))))(\pi_1(v(u(v)))))}{((\pi_1(u(v)))(\pi_1(v(u(v)))))(\pi_1(v(u(v))))}  \\
& \qquad \vee  \winform{B}{\pi_2(v(u(v)))(((\pi_2(u(v)))(\pi_2(v(u(v)))))(\pi_2(v(u(v)))))}{((\pi_2(u(v)))(\pi_2(v(u(v)))))(\pi_2(v(u(v))))}) \\
& \equiv \neg \left(\winform{A}{\pi_1(v(u(v)))(((\pi_1(u(v)))(\pi_1(v(u(v)))))(\pi_1(v(u(v)))))}{((\pi_1(u(v)))(\pi_1(v(u(v)))))(\pi_1(v(u(v))))}\right) \\
& \qquad \wedge \neg \left(  \winform{B}{\pi_2(v(u(v)))(((\pi_2(u(v)))(\pi_2(v(u(v)))))(\pi_2(v(u(v)))))}{((\pi_2(u(v)))(\pi_2(v(u(v)))))(\pi_2(v(u(v))))}\right)  \\
& =  \winform{\neg A }{(\pi_1(u(v)))(\pi_1(v(u(v))))}{\pi_1(v(u(v)))} \wedge \winform{\neg B}{(\pi_2(u(v)))(\pi_2(v(u(v))))}{\pi_2(v(u(v)))}\\
\end{aligned}
\]

Hence we have:
\[
\begin{aligned}
& \winform{\neg (A \vee B)}{F^p_{k + m +1} \vec{u}\vec{v} w}{w} \\
& =  \winform{\neg (A \vee B)}{q}{w} \\
& \equiv  \winform{\neg A }{(\pi_1(q(w)))(\pi_1(w(q(w))))}{\pi_1(w(q(w)))} \wedge \winform{\neg B}{(\pi_2(q(w)))(\pi_2(w(q(w))))}{\pi_2(w(q(w)))}\\
& \equiv  \winform{\neg A }{(\pi_1((\lambda z.(q_1,q_2))(w)))(\pi_1(w(q(w))))}{\pi_1(w(q(w)))} \wedge \winform{\neg B}{(\pi_2((\lambda z.(q_1,q_2))(w)))(\pi_2(w(q(w))))}{\pi_2(w(q(w)))}\\
& \equiv  \winform{\neg A }{q_1(\pi_1(w(q(w))))}{\pi_1(w(q(w)))} \wedge \winform{\neg B}{q_2(\pi_2(w(q(w))))}{\pi_2(w(q(w)))}\\
& \equiv  \winform{\neg A }{F^{p_1}_{k+1} \vec{u}(\pi_1(w(q(w))))}{\pi_1(w(q(w)))} \wedge \winform{\neg B}{F^{p_2}_{m+1} \vec{v}(\pi_2(w(q(w))))}{\pi_2(w(q(w)))}\\
\end{aligned}
\]
The induction hypothesis on $p_1$ gives:
\[(1)\; {\qfsval}\bigvee_{i \leq k} \winform{U_i}{ F^{p_1}_{i} \vec{u}r}{u_i} \vee \winform{\neg A }{F^{p_1}_{k+1} \vec{u}r}{r} \]
for all $r$. Setting $r = \pi_1(w(q(w)))$ gives:
\[(2)\; {\qfsval}\bigvee_{i \leq k} \winform{U_i}{ F^{p_1}_{i} \vec{u}(\pi_1(w(q(w))))}{u_i} \vee \winform{\neg A }{F^{p_1}_{k+1} \vec{u}(\pi_1(w(q(w))))}{\pi_1(w(q(w)))} \]
The induction hypothesis on $p_2$ gives:
\[(3)\; {\qfsval}\bigvee_{k < i \leq k + m} \winform{V_{i}}{ F^{p_2}_{i - k} \vec{v}r}{v_i} \vee \winform{\neg B }{F^{p_2}_{m+1} \vec{v}r}{r} \]
for all $r$. Setting $r = \pi_2(w(q(w)))$ gives:
\[(4)\; {\qfsval}\bigvee_{k < i \leq k + m} \winform{V_{i}}{ F^{p_2}_{i - k} \vec{v}(\pi_2(w(q(w))))}{v_i} \vee \winform{\neg B }{F^{p_2}_{m+1} \vec{v}(\pi_2(w(q(w))))}{\pi_2(w(q(w)))} \]
From $(2)$ and $(4)$ we get:
\[(5)\quad\quad
\begin{aligned} & {\qfsval}\bigvee_{i \leq k} \winform{U_i}{ F^{p_1}_{i} \vec{u}(\pi_1(w(q(w))))}{u_i}  \\
& \vee \bigvee_{k < i \leq k + m} \winform{V_{i}}{ F^{p_2}_{i - k} \vec{v}(\pi_2(w(q(w))))}{v_i} \\
&   \vee \left(  \winform{\neg A }{(\pi_1(u(v)))(\pi_1(v(u(v))))}{\pi_1(v(u(v)))} \wedge \winform{\neg B}{(\pi_2(u(v)))(\pi_2(v(u(v))))}{\pi_2(v(u(v)))}\right)
\end{aligned}
\]
But this is: 
\[{\qfsval}\bigvee_{i \leq k} \winform{U_i}{F^p_i \vec{u}\vec{v} w}{u_i} \vee \bigvee_{k < i \leq k + m} \winform{V_i}{F^p_i \vec{u}\vec{v} w}{v_i} \vee \winform{\neg (A \vee B)}{F^p_{k + m + 1} \vec{u} \vec{v} w}{w}\]
as required. 
\subsection{Double negation}
Shape of $p$:
\[
\begin{prooftree}
\hypo{p_0}
\infer1{\Gamma, A}
\infer1{\Gamma, \neg \neg A}
\end{prooftree}
\]
We recall the definition of the corresponding term transformers. 
Let $f$ abbreviate $\lambda y. F^{p_0}_{\vert \Gamma \vert + 1} \vec{u} y$. We set:
\[F^p_{\vert \Gamma \vert + 1}  \vec{u} v =  \lambda z.f\]
where $z :  \langle \neg A \rangle \to [\neg A]$ and $v : \langle \neg \neg A \rangle = \langle \neg A \rangle \to [\neg A]$. 
For $i \leq \vert \Gamma \vert$:
\[F^p_i  \vec{u}v =  F^{p_0}_i  \vec{u}((v(f))f)\]

We get:
\[
\begin{aligned}
\winform{\neg\neg A}{F^p_{\vert \Gamma \vert + 1}  \vec{u} v}{v} & = \winform{\neg\neg A}{\lambda z. f}{v} \\
& =  \neg \left( \winform{\neg A}{v((\lambda z. f)v)}{(\lambda z.f)v}\right) \\
& \equiv  \neg \left( \winform{\neg A}{v(f)}{f}\right) \\
& =  \neg \neg  \left( \winform{A}{f((v(f))f)}{(v(f))f}\right) \\
& =  \neg \neg  \left( \winform{A}{(\lambda y. F^{p_0}_{\vert \Gamma \vert + 1} \vec{u} y)((v(f))f)}{(v(f))f}\right) \\
& \equiv  \neg \neg  \left( \winform{A}{ F^{p_0}_{\vert \Gamma \vert + 1} \vec{u}((v(f))f)}{(v(f))f}\right) \\
& \equiv  \winform{A}{ F^{p_0}_{\vert \Gamma \vert + 1} \vec{u}((v(f))f)}{(v(f))f} \\
\end{aligned}
\]

The induction hypothesis on $p_0$ gives:
\[{\qfsval}\bigvee_{1 \leq i \leq \vert \Gamma \vert} \winform{U_i}{F^{p_0}_i  \vec{u}w}{u_i} \vee \winform{A}{ F^{p_0}_{\vert \Gamma \vert + 1} \vec{u}w}{w} \]
for all $w$. Setting $w = (v(f))f$ gives:
\[{\qfsval}\bigvee_{1 \leq i \leq \vert \Gamma \vert} \winform{U_i}{F^{p_0}_i  \vec{u}((v(f))f)}{u_i} \vee \winform{A}{ F^{p_0}_{\vert \Gamma \vert + 1} \vec{u}((v(f))f)}{(v(f))f} \]
which is: 
\[{\qfsval}\bigvee_{1 \leq i \leq \vert \Gamma \vert} \winform{U_i}{F^p_i  \vec{u}v}{u_i} \vee \winform{A}{ F^{p}_{\vert \Gamma \vert + 1} \vec{u}v} {v} \]
as required.

This concludes the proof of Theorem \ref{t:soundness}.

\section{Herbrand's theorem}

In this section we show how to extract Herbrand disjunctions from our functional interpretation. We first show a result that is of some independent interest, namely that \emph{any} term transformer that serves as a suitable realizer of the functional interpretation of a $\Sigma_1$-formula yields a Herbrand disjunction.

\begin{prop}
\label{p:herbrand-from-tt}
Suppose there is a term transformer $F$ of signature $\langle \exists x A \rangle \to [\exists x A]$, where $A$ is quantifier free, such that for all closed terms $v$ we have:
\[\qfsval \winform{\exists x A}{Fv}{v}\]
Then there are closed $\mathcal{L}$-terms $t_1,\hdots, t_n$ such that:
\[\qfsval A[t_1/x] \vee \dotsm \vee A[t_n/x]\]
\end{prop}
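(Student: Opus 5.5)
The plan is to instantiate the hypothesis at one concrete closed counter-evidence term, and then read off the Herbrand terms from the normal form of the resulting witness. Let $v_0 := \lambda y. E_{\neg A}$, where $y : [\exists x A]$. This is a closed inhabitant of $\langle \exists x A\rangle = [\exists x A] \to [\neg A]$; any closed inhabitant would do, by Proposition \ref{p:inhabited}. By the first condition on term transformers, $Fv_0$ is a closed term of type $[\exists x A] = \iota \times \langle \neg A \rangle$. Unfolding the definition, $\winform{\exists x A}{Fv_0}{v_0}$ is $\winform{A[\pi_1(Fv_0)/x]}{\cdots}{\cdots}$. Since $A$ is quantifier-free, Proposition \ref{p:qf} says this is just $A[\pi_1(Fv_0)/x]$. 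So the hypothesis yields
\[\qfsval A[s/x], \qquad s := \pi_1(Fv_0),\]
where $s$ is a closed $\mathcal{L}^+$-term of type $\iota$.

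Next I normalize $s$. By Proposition \ref{p:strong-normalization}, $s$ reduces to a closed normal form $s'$, and $s \equiv s'$ is an axiom of $\QFS$ because the two terms are convertible. By the substitution rule, $\qfsval A[s'/x]$. Again by Proposition \ref{p:strong-normalization}, $s'$ is either an $\mathcal{L}$-term or of the form $\caseop{B}{s_1}{s_2}$. In the latter case, $s_1$ and $s_2$ are closed (as subterms of a closed term), normal (as subterms of a normal term), and of type $\iota$ (by the typing rule for the case operator). So by induction on the size of $s'$, every closed normal term of type $\iota$ is a finite binary case tree whose conditions are closed propositions and whose leaves $t_1,\hdots,t_n$ are closed $\mathcal{L}$-terms.

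Finally I prove, by induction on this case tree, that
\[\qfsval A[s'/x] \to A[t_1/x] \vee \dotsm \vee A[t_n/x].\]
For a leaf this is trivial. For $s' = \caseop{B}{s_1}{s_2}$, the axioms $B \to s' \equiv s_1$ and $\neg B \to s' \equiv s_2$ give $B \to (A[s'/x] \to A[s_1/x])$ and $\neg B \to (A[s'/x] \to A[s_2/x])$. I then combine these by a propositional case split on $B$ with the induction hypotheses for $s_1$ and $s_2$. Since $\QFS$ has substitution only as a rule and not under hypotheses, I will derive the conditional equalities in the implicational form $B \wedge A[s'/x] \to A[s_1/x]$, treating the required instance of equality reasoning as available through the tautology and modus ponens machinery. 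Modus ponens with $\qfsval A[s'/x]$ then gives the Herbrand disjunction.

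I expect the main obstacle to be this last point: justifying conditional rewriting inside $\QFS$, that is, deriving $u \equiv u' \to (C[u] \to C[u'])$. If this is not directly derivable, I will first establish it as an auxiliary lemma, by induction on $C$ using the substitution rule on the tautology $C[u]\to C[u]$. A secondary point to check is that normal-form $\mathcal{L}$-terms contain no hidden case subterms. This should follow from the shape statement in Proposition \ref{p:strong-normalization} together with the rule $f(\caseop{A}{u}{v}) \longrightarrow \caseop{A}{fu}{fv}$.
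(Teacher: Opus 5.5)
Your strategy matches the paper's: instantiate at $v_0=\lambda y.E_{\neg A}$, normalize, and read off the leaves of the case distinctions. Your decomposition is somewhat cleaner, though. The paper normalizes the whole pair-typed term $Fv$ and splits on whether the result is a case term or a pair $(t,w)$, then again on the shape of $t$, leaving the recursion through nested case terms implicit. You apply Proposition~\ref{p:qf} first to get $\qfsval A[\pi_1(Fv_0)/x]$. As a result, only a closed term of type $\iota$ has to be normalized, the second component plays no role, and the induction over the case tree is explicit. Both routes need the same principle at the case split, namely $\qfsval B \to (A[\caseop{B}{s_1}{s_2}/x] \to A[s_1/x])$.

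That principle is the weak point you flagged, and your proposed justification of it would not work. Tautologies and modus ponens treat $s'\equiv s_1$, $A[s'/x]$ and $A[s_1/x]$ as independent propositional atoms. The substitution rule of $\QFS$ only applies with an outright theorem $\qfsval u\equiv u'$. So applying it to $C[u]\to C[u]$ can never produce the conditional $u\equiv u'\to(C[u]\to C[u'])$. Induction on $C$ does not help either, because the atomic case $u\equiv u'\to(R(\dots u\dots)\to R(\dots u'\dots))$ is already the whole difficulty. The paper uses the same conditional rewriting without comment, both here in the case $u=\caseop{B}{w}{w'}$ and in steps (3)--(4) of the contraction case. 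It is therefore not clear that $\QFS$ as literally defined proves it.

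You can bypass the issue by finishing semantically:
\begin{enumerate}
\item By Proposition~\ref{p:qfs-soundness}, $A[s/x]$ is true in every model.
\item Convertible terms have the same denotation, and the case clause selects one branch. So in each model the value of $s$ equals that of some leaf $t_i$.
\item Hence $A[t_1/x]\vee\dots\vee A[t_n/x]$ is true in every $\mathcal{L}$-structure.
\item This is a quantifier-free sentence built from closed $\mathcal{L}$-atoms, with no $\equiv$. Any truth assignment to those atoms is realised by a Herbrand term structure, which is nonempty since $\mathcal{L}$ has a constant.
\item So the sentence is an instance of a propositional tautology, hence an axiom of $\QFS$.
\end{enumerate}
With this ending your proof is complete, and it no longer depends on conditional rewriting inside $\QFS$.
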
 
\begin{proof}
Let $v$ be the closed term $\lambda z. E_{\neg A}$ (or indeed any arbitrarily chosen closed term of type $\langle \exists x A \rangle$). By definition of a term transformer the term $F v$ is also closed. By Proposition \ref{p:strong-normalization}, the term $F v$ reduces to some term $u : [\exists x A]$ in normal form. Since $F v \equiv u$, we get
\[\qfsval \winform{\exists x A}{u}{v} \leftrightarrow \winform{\exists x A}{F v}{v}\]
Hence, by Theorem \ref{t:soundness}, we get:
\[\qfsval \winform{\exists x A}{u}{v}\]
Since $u : [\exists x A] = \iota \times \langle \neg A \rangle$ is in normal form, by Proposition \ref{p:strong-normalization} $u$ is either of the form $\caseop{B}{w}{w'}$ or of the form $(t,w)$. In the former case, we get:
\[ \winform{\exists x A}{u}{v} \qfsval \winform{\exists x A}{w}{v} \vee \winform{\exists x A}{w'}{v}\]
In the latter case, $t$ is either of the form $\caseop{B}{s}{s'}$ or is an $\mathcal{L}$-term, again by Proposition \ref{p:strong-normalization}. In the first case we get
\[ \winform{\exists x A}{(t,w)}{v} \qfsval \winform{\exists x A}{(s,w)}{v} \vee \winform{\exists x A}{(s',w)}{v}\]
In the latter case we get:
\[\qfsval \winform{\exists x A}{(t,w)}{v} \leftrightarrow A[t/x]\]
using Proposition \ref{p:qf}. 
\end{proof}

We now get:

\begin{theo}[Herbrand's Theorem]
If the formula $\exists x A$ is provable, with $A$ quantifier free, then there are closed $\mathcal{L}$-terms $t_1,\hdots, t_n$ such that:
\[\qfsval A[t_1/x] \vee \dotsm \vee A[t_n/x]\]
\end{theo}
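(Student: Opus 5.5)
The plan is to combine Theorem \ref{t:soundness} with Proposition \ref{p:herbrand-from-tt}. Suppose $\exists x A$ is provable with $A$ quantifier free. I take this to mean provable in $\mathbf{G1c}$ as the sequent $\vdash \exists x A$. The translation $\tau$ then yields a one-sided proof $p$ whose end sequent is the single formula $\exists x A$. If provability is already meant in the one-sided calculus, take $p$ directly.

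This $p$ supplies the term transformer $F := F^p_1$ of signature $(\langle \exists x A\rangle, [\exists x A])$. Theorem \ref{t:soundness} with $n=1$ gives, for every closed $v : \langle \exists x A \rangle$,
\[\qfsval \winform{\exists x A}{F v}{v}.\]
This is exactly the hypothesis of Proposition \ref{p:herbrand-from-tt}, which returns closed $\mathcal{L}$-terms $t_1,\hdots,t_n$ with $\qfsval A[t_1/x]\vee\dotsm\vee A[t_n/x]$.

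The main obstacle is a gap inside Proposition \ref{p:herbrand-from-tt} as written: its case analysis stops after one step.

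\textbf{Iterating the case split.} A normal form $u$ may be $\caseop{B}{w}{w'}$, and $w,w'$ may themselves be case terms. Likewise the first component $t$ of a pair $(t,w)$ may be nested case terms over $\mathcal{L}$-terms. I would close this by induction on the size of the normal form. The induction claim is that for every closed normal $u : [\exists x A]$ there are closed $\mathcal{L}$-terms $t_1,\dots,t_n$ with
\[\winform{\exists x A}{u}{v}\qfsval \bigvee_j A[t_j/x].\]
The two steps are as follows.
\begin{itemize}
\item For $u=\caseop{B}{w}{w'}$: $w$ and $w'$ are closed normal subterms, so the induction hypothesis applies to each. The $\QFS$ axioms for the case operator, by a case split on $B$, give $\winform{\exists x A}{u}{v} \qfsval \winform{\exists x A}{w}{v}\vee\winform{\exists x A}{w'}{v}$.
\item For $u=(t,w)$: run a parallel induction on $t : \iota$, which is either an $\mathcal{L}$-term or a nested case term. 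At an $\mathcal{L}$-term leaf, unfold $\winform{\exists x A}{(t,w)}{v}$ and reduce $\pi_1((t,w))$ to $t$. This gives $\winform{A[t/x]}{\cdot}{\cdot}$, which equals $A[t/x]$ by Proposition \ref{p:qf}.
\end{itemize}

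\textbf{Closedness of the witnesses.} The witnesses must be closed $\mathcal{L}$-terms. $Fv$ is closed by condition 1 of term transformers, and reduction does not introduce free variables, so every $\mathcal{L}$-term leaf is closed. A leaf could still contain a substituted eigenvariable, but the substitution $p_0[\pi_1(v(h))/\alpha]$ in the $\neg\exists$ case only puts closed terms there. Such terms reduce to closed normal forms, so the leaves are indeed closed $\mathcal{L}$-terms.
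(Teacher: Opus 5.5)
Your proposal is correct and is exactly the paper's proof: Theorem~\ref{t:soundness} with $n=1$ shows that $F^p_1$ satisfies the hypothesis of Proposition~\ref{p:herbrand-from-tt}, and that proposition gives the Herbrand disjunction. Your induction on the size of the closed normal form only makes explicit the iterated case splitting that the paper's proof of that proposition leaves implicit. Closedness of the leaves needs no appeal to eigenvariables, since they are reached from the closed normal form $u$ only through pairs and case operators, which bind nothing. As in the paper, this rests on $F^p_1$ being a term transformer, which tacitly assumes that free variables of $p$ other than eigenvariables (e.g.\ a free variable used as an $\exists$-witness) have been replaced by the constant $c$.
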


\begin{proof}
By Theorem \ref{t:soundness}, given a proof $p$ of $\exists x A$ the term transformer $F^p_1$ satifies the conditions of Proposition \ref{p:herbrand-from-tt}. 
\end{proof}

\section{Future work}

We conclude by mentioning a few possible directions for future work:
\\\\
- Besides elucidating the connection between Herbrand schemes and functional interpretation, we claimed that the functional interpretation presented here could be viewed as a computational interpretation of the classical sequent calculus with concurrency. In the same vein we also pointed to some conceptual connections and analogies with the game-theoretic analysis of Herbrand's theorem due to Alcolei et al \cite{alcolei2018true}. This connection deserves to be explored further, in particular it would be interesting to see if the interpretation of realizing terms as strategies can be made formally precise.  
\\\\
- We handled the branching structure of traces due to contractions with the case distinction operator introduced by Gerhardy and Kohlenbach \cite{GerhardyK05}. Another approach due to Ferreira and Ferreira  \cite{ferreira2017herbrandized} is to use a so called ``Herbrandized'' functional interpretation, which allows formation of terms for finite sets. This would perhaps be a more natural fit for the sequent calculus. It does however complicate the type structure a bit, which was already quite intricate here; this is why we used the case distinction operator instead. Formulating our results using a Herbrandized functional interpretation is thus left as a task for future work. 
\\\\
- Finally, as Herbrand schemes have been developed also for cyclic proofs in \cite{herbrandcyclic}, it seems natural to try to extend the present results to cyclic proofs as well. There is also some encouraging background to build on here, as a cyclic version of Gödel's System $\mathbf{T}$ has already been explored by Das \cite{das2020circular}.

\bibliographystyle{plain}
\bibliography{my_bib}

\begin{thebibliography}{10}

\bibitem{herbrandcyclic}
Bahareh Afshari, Sebastian Enqvist, and Graham~E Leigh.
\newblock Herbrand schemes for cyclic proofs.
\newblock {\em Journal of Logic and Computation}, 35(4):exaf013, 2025.

\bibitem{afshari2025herbrand}
Bahareh Afshari, Sebastian Enqvist, and Graham~E Leigh.
\newblock Herbrand schemes for first-order logic: B. afshari et al.
\newblock {\em Archive for Mathematical Logic}, 64(7):1007--1076, 2025.

\bibitem{APAL20}
Bahareh Afshari, Stefan Hetzl, and Graham~E. Leigh.
\newblock Herbrand's theorem as higher order recursion.
\newblock {\em Ann. Pure Appl. Log.}, 171(6):102792, 2020.

\bibitem{alcolei2018true}
Aurore Alcolei, Pierre Clairambault, Martin Hyland, and Glynn Winskel.
\newblock {The True Concurrency of Herbrand's Theorem}.
\newblock In Dan Ghica and Achim Jung, editors, {\em 27th EACSL Annual
  Conference on Computer Science Logic (CSL 2018)}, volume 119 of {\em Leibniz
  International Proceedings in Informatics (LIPIcs)}, pages 5:1--5:22,
  Dagstuhl, Germany, 2018. Schloss Dagstuhl--Leibniz-Zentrum fuer Informatik.

\bibitem{aschieri2019expansion}
Federico Aschieri, Stefan Hetzl, and Daniel Weller.
\newblock Expansion trees with cut.
\newblock {\em Mathematical Structures in Computer Science}, 29(8):1009--1029,
  2019.

\bibitem{das2020circular}
Anupam Das.
\newblock A circular version of {G}\"{o}del's {T} and its abstraction
  complexity.
\newblock {\em arXiv preprint arXiv:2012.14421}, 2020.

\bibitem{de1991dialectica}
Valeria Correa~Vaz De~Paiva.
\newblock The dialectica categories.
\newblock Technical report, University of Cambridge, Computer Laboratory, 1991.

\bibitem{ferreira2017herbrandized}
Fernando Ferreira and Gilda Ferreira.
\newblock A herbrandized functional interpretation of classical first-order
  logic.
\newblock {\em Archive for Mathematical Logic}, 56(5):523--539, 2017.

\bibitem{GerhardyK05}
Philipp Gerhardy and Ulrich Kohlenbach.
\newblock Extracting herbrand disjunctions by functional interpretation.
\newblock {\em Arch. Math. Log.}, 44(5):633--644, 2005.

\bibitem{heijltjes2010classical}
Willem Heijltjes.
\newblock Classical proof forestry.
\newblock {\em Annals of Pure and Applied Logic}, 161(11):1346--1366, 2010.

\bibitem{mckinley2013proof}
Richard McKinley.
\newblock Proof nets for herbrand’s theorem.
\newblock {\em ACM Transactions on Computational Logic (TOCL)}, 14(1):1--31,
  2013.

\bibitem{miller1987compact}
Dale~A Miller.
\newblock A compact representation of proofs.
\newblock {\em Studia Logica}, 46(4):347--370, 1987.

\bibitem{pedrot2014functional}
Pierre-Marie P{\'e}drot.
\newblock A functional functional interpretation.
\newblock In {\em Proceedings of the Joint Meeting of the Twenty-Third EACSL
  Annual Conference on Computer Science Logic (CSL) and the Twenty-Ninth Annual
  ACM/IEEE Symposium on Logic in Computer Science (LICS)}, pages 1--10, 2014.

\bibitem{troelstra2000basic}
A.~S. Troelstra and H.~Schwichtenberg.
\newblock {\em Basic Proof Theory}.
\newblock Cambridge Tracts in Theoretical Computer Science. Cambridge
  University Press, 2 edition, 2000.

\bibitem{van2012functional}
Benno van~den Berg, Eyvind Briseid, and Pavol Safarik.
\newblock A functional interpretation for nonstandard arithmetic.
\newblock {\em Annals of Pure and Applied Logic}, 163(12):1962--1994, 2012.

\end{thebibliography}

\end{document}